\documentclass[sigconf]{acmart}

\copyrightyear{2026}
\acmYear{2026}
\setcopyright{cc}
\setcctype{by}
\acmConference[KDD '26]{Proceedings of the 32nd ACM SIGKDD Conference on Knowledge Discovery and Data Mining V.2}{August 09--13, 2026}{Jeju Island, Republic of Korea}
\acmBooktitle{Proceedings of the 32nd ACM SIGKDD Conference on Knowledge Discovery and Data Mining V.2 (KDD '26), August 09--13, 2026, Jeju Island, Republic of Korea}
\acmDOI{10.1145/3770855.3818145}
\acmISBN{979-8-4007-2259-2/2026/08}

\usepackage[most]{tcolorbox}
\usepackage{caption}
\usepackage{balance}
\usepackage{graphicx}
\usepackage{textcomp}
\usepackage{optidef} %

\usepackage{algorithm}
\usepackage{enumitem}
\usepackage{subcaption}
\usepackage[noend]{algpseudocode}
\usepackage[subtle]{savetrees}
\usepackage{amsmath,amsfonts}
\usepackage{amsthm}

\newtheorem{theorem}{Theorem}
\newtheorem{lemma}[theorem]{Lemma}

\theoremstyle{plain}
\newtheorem{proposition}[theorem]{Proposition}

\theoremstyle{definition}
\newtheorem{definition}[theorem]{Definition}
\newtheorem{example}[theorem]{Example}
\theoremstyle{remark}
\newtheorem{remark}[theorem]{Remark}

\usepackage[table]{xcolor}
\usepackage{multirow}

\newcommand{\multigroup}{\texttt{$g$-GBLR}}
\newcommand{\arbitrary}{\texttt{SGBLR}}

\newcommand{\singAlg}{\texttt{ERMB}}
\newcommand{\multAlg}{\texttt{MGRA}}
\newcommand{\maxsat}{\textsc{Max-2-SAT}}
\newcommand{\maxunisat}{\textsc{Max-1-in-2-SAT}}
\newcommand{\maxcut}{\textsc{MaxCut}}
\newcommand{\comphs}{comparison hyperplane}
\newcommand{\Comphs}{Comparison hyperplane}
\newcommand{\ilpbase}{$\mathtt{MILP_{Base}}$}
\newcommand{\ilprefined}{$\mathtt{MILP_{Refined}}$}
\newcommand{\sampling}{\texttt{LocalSearch}}

\newtcolorbox{examplebox}[1][]{%
  breakable,
  colback=blue!10,
  colframe=blue!20,
  arc=2mm, %
  fonttitle=\bfseries,
  boxrule=0mm,
  boxsep=1mm,
  left=0mm,
  right=0mm,
  top=0mm,
  bottom=0mm,
  enhanced jigsaw,
  #1
}

\begin{document}

\title{Explaining Rankings with Hidden Group Bonuses}

\author{Alvin Hong Yao Yan}
\orcid{0009-0007-6146-5871}
\affiliation{%
  \institution{National University of Singapore}
  \city{Singapore}
  \country{Singapore}
}
\email{alviny@u.nus.edu}

\author{Suraj Shetiya}
\orcid{0000-0001-9166-2365}
\affiliation{%
  \institution{Indian Institute of Technology Bombay}
  \city{Mumbai}
  \country{India}
}
\email{surajs@cse.iitb.ac.in}

\author{Sujoy Bhore}
\orcid{0000-0003-0104-1659}
\affiliation{%
  \institution{Indian Institute of Technology Bombay}
  \city{Mumbai}
  \country{India}
}
\email{sujoy@cse.iitb.ac.in}

\author{Priyanka Golia}
\orcid{0009-0004-0704-226X}
\affiliation{%
  \institution{Indian Institute of Technology Delhi}
  \city{Delhi}
  \country{India}
}
\email{pgolia@cse.iitd.ac.in}

\author{Diptarka Chakraborty}
\orcid{0000-0002-0687-5823}
\affiliation{%
  \institution{National University of Singapore}
  \city{Singapore}
  \country{Singapore}
}
\email{diptarka@nus.edu.sg}

\renewcommand{\shortauthors}{Alvin Hong Yao Yan, Suraj Shetiya, Sujoy Bhore, Priyanka Golia, and Diptarka Chakraborty}

\begin{abstract}
Determining a linear utility function that correlates with observed candidate rankings is a foundational problem with applications in domains such as admissions, hiring, and recommendation systems, e.g., [Storandt and Funke, AAAI'19, Zhang et al., KDD'23, Wang et al., ICDE'24 (best paper award), Chen and Wong, VLDB'24]. Traditionally, these models assume full visibility into the feature sets used to determine the utility score. However, real-world scenarios often involve sensitive attributes that are hidden or partially observed, yet may influence outcomes through additive bonuses designed to promote fairness, as in [Gale and Marian, ICDE'24]. Motivated by such practical concerns, we study a variant of the ranking explanation problem where sensitive features are unobserved but may influence candidate rankings through group-specific linear boosts.

We present a formal framework for modeling this problem and develop an algorithmic solution that leverages constraint satisfaction and automated reasoning techniques to jointly infer the linear scoring parameters and latent group bonuses consistent with the observed rankings. We further show that determining a satisfying linear function with group-specific bonuses is \textsf{NP}-hard in general, but when the feature dimension and the number of groups are constant, the problem admits a polynomial-time solution. Our approach is the first to address this nuanced variant, which captures key real-world challenges in fair ranking and admission systems. We perform extensive experiments on both real-world and synthetic datasets, demonstrating that our method effectively recovers hidden bonus structures and provides faithful explanations of observed ranking outcomes.
\end{abstract}

\begin{CCSXML}
<ccs2012>
   <concept>
       <concept_id>10002951.10003317.10003338</concept_id>
       <concept_desc>Information systems~Retrieval models and ranking</concept_desc>
       <concept_significance>500</concept_significance>
       </concept>
   <concept>
       <concept_id>10003456.10010927</concept_id>
       <concept_desc>Social and professional topics~User characteristics</concept_desc>
       <concept_significance>100</concept_significance>
       </concept>
   <concept>
       <concept_id>10002951.10003317.10003338.10003343</concept_id>
       <concept_desc>Information systems~Learning to rank</concept_desc>
       <concept_significance>300</concept_significance>
       </concept>
 </ccs2012>
\end{CCSXML}

\ccsdesc[500]{Information systems~Retrieval models and ranking}
\ccsdesc[100]{Social and professional topics~User characteristics}
\ccsdesc[300]{Information systems~Learning to rank}

\keywords{Explainable Ranking, Linear Utility Models, Hidden Group Bonuses, Hyperplane Arrangements, Mixed Integer Linear Programming}

\maketitle

\newcommand\kddavailabilityurl{https://doi.org/10.5281/zenodo.20322450}
\ifdefempty{\kddavailabilityurl}{}{
\begingroup\small\noindent\raggedright\textbf{Resource Availability:}\\
The source code associated with this paper is publicly available at \url{\kddavailabilityurl}.
\endgroup
}

\section{Introduction}\label{sec:intro}

Ranking algorithms are central to numerous AI applications, including web search, recommender systems, and information retrieval.
However, despite their pervasive use, the decision-making processes behind these algorithms often remain opaque, which raises critical concerns around trust, fairness, and accountability. As a result, explainability in ranking has emerged as a pressing research challenge within the AI community. 
Several recent works have sought to adapt principles from interpretable machine learning to the ranking setting~\cite{chowdhury2023rank, chowdhury2025rankshap,heuss2025rankingshap}, aiming to generate faithful, user-understandable explanations for ranked outputs. 
Some approaches focus on post-hoc explanations, while others design models that are easier to interpret from the start~\cite{singh2019policy, ai2018unbiased}. 
There is also increasing attention to broader goals like fairness and trust in ranked results~\cite{doshi2017towards,rampisela2024can}. Yet, many basic questions remain 
open: \emph{what does it mean for a ranking to be explainable? How do we measure the quality of an explanation when the output is an ordered list instead of a single label?}

Modeling user preferences is a central challenge in multi-criteria decision-making. A common and effective approach is to formalize these preferences with a \emph{utility function} that assigns a numerical score/utility to each candidate, thereby inducing a natural ordering. Candidates are typically characterized by a set of features, which can be distinguished as \emph{scoring attributes} -- that directly contribute to utility, and \emph{sensitive attributes} -- used to enforce auxiliary goals such as fairness or diversity. Among various functional forms, linear utility models are arguably the most widely adopted. Their prevalence stems from a powerful combination of simplicity and practical efficacy. The straightforward structure of a weighted sum not only facilitates deployment in large-scale systems like web search and recommender engines~\cite{morik2020controlling,guo2023rankdnn,thonet2022listwise} but also offers inherent transparency~\cite{sen2020curious,gale2020explaining}. This transparency, simplicity, and effectiveness make linear utility-based models a natural starting point for research into explainability and learning to rank~\cite{asudeh2019rrr, chen2024robust, chen2025synthesizing}, interactive learning~\cite{nanongkai2012interactive, wang2021interactive, zhang2023finding}, reverse regret~\cite{wang2024reverse} (best paper award at ICDE'24), and the integration of fairness and diversity mechanisms~\cite{liu2024fair, guo2025towards, wang2025interactive}. Despite their simplicity, linear utilities have demonstrated competitive performance, confirming their status as a robust and versatile modeling choice, e.g.,~\cite{qian2015learning, storandt2019algorithms}. We refer the readers to Section~\ref{sec:related} for a more elaborate discussion on related works.

Several works have studied how to quantify the distance between a given ranking and one induced by a (linear) utility function. Common notions include the number of violated pairwise orders, the distance to a consistent utility vector, and the minimum perturbation needed to make the ranking ``linearizable''. Exact computation is often intractable in the presence of conflicting preferences, motivating the study of approximation and structural relaxations (e.g.,~\cite{dwork2001rank,davenport2004computational}). 

In many real-life scenarios, a small additive bonus is essential to ensure that a linear utility-based ranking model aligns with observed outcomes. Such a bonus is primarily calculated based on the sensitive attributes~\cite{shetiya2022fairness, asudeh2019designing, wang2025interactive} -- separate from the scoring attributes used to compute utility -- associated with each candidate or item. This often arises when external goals, such as affirmative action, fairness constraints, or prior trust scores, must be incorporated uniformly across items of certain groups. The application of compensatory additive bonuses to linear utilities is an approach that has been explored for promoting fairness in ranking systems~\cite{gale2024explainable}. Without such a term, even simple adjustments to promote underrepresented groups may not be achievable within the linear feature space, making the additive bonus a minimal yet powerful mechanism for enforcing policy-driven interventions (see~\cite{mathioudakis2020affirmative}).

We motivate our problem with the following running example of diversity in University admissions.
\begin{table}[t]
\centering
\small
\setlength{\tabcolsep}{8pt}
\rowcolors{2}{gray!8}{white}
\begin{tabular}{@{}lcccc@{}}
\toprule
\multirow{2}{*}{\textbf{Candidate}} & \multirow{2}{*}{\textbf{Group}} & \textbf{Test} & \textbf{SAT}  & \multirow{2}{*}{\textbf{Rank}}\\
\hiderowcolors %
& & \textbf{Score} & \textbf{Score} & \\
\showrowcolors %
\midrule
$c_1$ & - & 9.8 & 2.0 & 3 \\
$c_2$ & - & 8.1 & 7.8 &  1 \\
$c_3$ & - & 7.2 & 8.6 & 2 \\
$c_4$ & - & 6.9 & 4.2 & 7\\
$c_5$ & UP & 6.0 & 3.2 & 4 \\
$c_6$ & UP & 3.7 & 7.1 & 5 \\
$c_7$ & - & 5.1 & 8.0 & 6\\
$c_8$ & UP & 4.5 & 3.5 & 8\\
\bottomrule
\end{tabular}
\caption{Example dataset of eight university applicants.}
\label{table:running-example-dei}
\vspace{-10mm}
\end{table}

\begin{examplebox}\label{exmpl:university}
    \begin{example}[University admissions]
        Consider a public university conducting its annual admissions, where applicants are ranked using a \emph{linear utility-based model}. Academic performance is assessed through multiple indicators, such as entrance test and SAT scores, which are normalized, weighted, and linearly aggregated to compute a base utility score.

        As a public institution committed to diversity, equity, and inclusion (DEI), the university augments this model with an \emph{additive incentive mechanism}. Recognizing that applicants from underprivileged (UP) backgrounds may face structural disadvantages affecting academic performance, an expert committee assigns a group-specific bonus score. This bonus is added to the base utility score of each applicant in the corresponding group, yielding an adjusted utility score that determines the final ranking.

        For example, suppose eight applicants apply, as shown in Table~\ref{table:running-example-dei}. The candidates are ranked $c_2$, $c_3$, $c_1$, $c_5$, $c_6$, $c_7$, $c_4, c_8$ based on their adjusted utility scores. Given only anonymized public data (academic indicators and rankings), third-party auditors aim to explain the observed outcomes under this \emph{linear utility function with additive group-level bonuses}. The sensitive attributes, such as group membership (Column 2 of Table~\ref{table:running-example-dei}), and the exact bonus values are not disclosed.

    \end{example}
\end{examplebox}

\paragraph{Problem (Informal) Formulation} Given a ranking, can it be explained by a linear utility function with a small number of additive bonuses? In the \textsc{Singleton Group Bonus Linear Ranking} (\arbitrary) version, arbitrary bonuses may be given to at most $k$ items. In the \textsc{$g$-Group Bonus Linear Ranking} (\multigroup) version, there are $g$ (special) groups each containing a limited number of items, and group-specific bonuses -- each item of a particular group gets the same amount as a bonus, and bonus value for different groups could be different -- are allowed. The goal is to find weights and bonus assignments that reproduce the ranking exactly.

Note that Example~\ref{exmpl:university} is for the \texttt{$1$-GBLR} variant with one (special) group being UP -- all the candidates of the UP group receive an additive bonus of the same amount. %

Our method conceptually builds upon seminal works LIME~\cite{ribeiro2016should} and SHAP~\cite{10.5555/3295222.3295230}, which infer feature contributions to predictions, as our model learns weights for each feature, which provides an explanation for the observed ranking. It extends this line of work by modeling hidden additive bonuses beyond observable features as part of the explanation. We envision our technique as contributing to the broader area of trustworthy and responsible data science, particularly in applications such as algorithmic auditing and explainability. For example, in an auditing setting, suppose a ranking model is used to evaluate candidates, and there is a requirement that all candidates within the same group receive identical bonuses. If our method is unable to recover a linear utility function with such additive bonuses that explains the observed ranking, this serves as evidence that the ranking violates the intended fairness constraint. Alternatively, our framework can be used for interpretability. If an observed ranking can be well-approximated by our model, the learned weights provide insight into the relative importance of different attributes, while the inferred additive bonuses reveal how much advantage may have been assigned to specific candidates. This can help assess whether such adjustments are reasonable within the given domain and context.

\subsection{Our Contributions}
In this paper, we initiate a formal study on the problem of learning a linear utility function with small additive group-wise bonuses that realizes a given ranking. On the theoretical side, we provide exact algorithms and establish computational hardness. Next, we address the question of designing scalable algorithms that work efficiently in practice. More specifically, we show the following results.
\begin{itemize}[leftmargin=*]
    \item We start by considering the singleton group variant (\arbitrary) that captures the allowance of arbitrary (potentially different) additive bonuses to a few individual items. We design a hyperplane-based algorithm -- enumerating over all the regions specified by a set of comparison hyperplanes, and then performing certain \emph{longest common subsequence (LCS)} computation in each region. Our algorithm works in polynomial time for constant-dimensional points. We further extend our algorithm to a more general variant of group-wise bonuses (\multigroup) by introducing a new auxiliary dimension per group, leading to an algorithm that runs in polynomial time for a constant number of groups and for constant-dimensional points. Note that often in practice, both the number of groups and the dimension of the feature vectors are indeed small (bounded by some constant).
    \item We complement our algorithmic results with computational hardness. The problem, even if in its simplest form of singleton groups, is \texttt{NP}-hard for arbitrary dimensions, establishing that the exponential dependency on dimension of our algorithm is unavoidable assuming \texttt{P} $\ne$ \texttt{NP}.
    \item While the theoretical algorithm establishes tractability in principle at least for constant dimension, its high complexity motivates us to explore more practical alternatives using modern ILP solvers. We formulate the problem as an MILP and evaluate its performance using state-of-the-art solvers. 
    \item We conduct extensive experiments on both real-world and synthetic datasets to demonstrate the efficacy and scalability of our methods. In case of \arbitrary\, we observe that for two-dimensional data the MILP formulations scale well, whereas the computational hardness inherent in the hyperplane-based algorithm prevents it from scaling beyond small instances.  In the case of \multigroup\ with $g = 1$, the MILP formulations exhibit strong scalability with respect to both dimensionality (up to $d = 17$) and the number of tuples (up to approximately $75{,}000$). For the real-world dataset, containing around $300{,}000$ tuples, the MILP formulations successfully handle the \multigroup\ problem for both $g = 1$ and $g = 2$, highlighting their practical applicability at large scale.
\end{itemize}    

\section{Preliminaries}\label{sec:prelim}
In this section, we briefly introduce the notations used in the paper and then formally describe the problem formulation.

\noindent\textbf{Data model}
Consider a dataset $\mathcal{D} = \{t_1, t_2, \ldots, t_n\}$ of $n$ tuples, each with $d$ numerical attributes $\mathcal{A} = A_1, A_2, \ldots, A_d$ and additional categorical attributes (such as \emph{gender} and \emph{race}), some of which are sensitive and not revealed during the ranking process. We use $t_i[j]$ to denote the value of tuple $t_i$ for attribute $A_j$.

\noindent\textbf{Utility model} As part of the black-box scoring process,  each tuple $t_i\in\mathcal{D}$ is scored based on its values for the numerical attributes $\mathcal{A}$. While there are numerous ``\emph{models}'' that have been used to score tuples (such as \emph{monotone}~\cite{fagin2001optimal}, \emph{linear}~\cite{chang2000onion,das2006answering}), we adapt the widely used linear utility functions to this work, as in several previous works in the literature, e.g.,~\cite{wang2024reverse, chen2024robust, wang2025interactive}.
\emph{Linear utility function} $s_w$ consists of a $d$ dimensional weight vector $w=\{w_1, w_2, \ldots, w_d\}$. Given a linear function $s_w$, the score/utility for tuple $t$ is computed as $s_w(t)=\sum_{j=1}^d w_j \cdot t[j]$.
As each attribute in the dataset has a non-zero contribution towards scoring, $\forall_{j\in [d]}\ w_j\neq 0$.

Without loss of generality, we assume that $\forall_{j\in [d]}\ w_j \ge 0$ and $\sum_{j \in [d]} w_j = 1$; such assumptions are justified in~\cite{nanongkai2012interactive,  wang2021interactive}. For the sake of completeness, we provide a brief discussion on positive weights in Appendix~\ref{sec:app-positiveweight}.

\noindent\textbf{Bonus-aware utility function}: Based on certain attributes which are termed as \emph{sensitive attributes}, bonus scores may be allotted to items which belong to certain groups. 

In our running example, under-privileged candidates in the dataset would receive an additive bonus based on historical biases. We model this as an additive bonus on top of the linear utility function score. All tuples sharing the same sensitive attribute $\tau$ receive the same bonus, denoted by \texttt{bonus}($t[\tau]$). 

Based on this formulation, the bonus-aware utility function, which accounts for the sensitive attributes, is given by,
\vspace{-1mm}
\begin{align}
    f_w(t)=\texttt{bonus}(t[\tau])+\sum\nolimits_{j=1}^d w_j \cdot t[j]
\end{align}
\vspace{-4mm}

For simplicity, we denote the bonus-based utility function $f_w$ as $f$ throughout the paper.

To illustrate, in Example~\ref{exmpl:university}, the bonus-based utility function for the tuples in the under-privileged group \texttt{UP} is given by $f_w(t)=2\times test + 1\times SAT + 5$, where $5$ is the additive bonus given only to \texttt{UP} candidates.

\noindent\textbf{Ranking}: The scores of the tuples using the bonus-based linear utility function are used to rank the tuples. As a consequence of the ranking process, a complete ranking over the $n$ input tuples can be observed. We refer to this ranking 
as $\pi$.

\noindent\textbf{Realizable by linear functions}: 
Given a dataset $\mathcal{D}$ and a ranking $\pi$, we say that $\pi$ is \emph{realizable} if there exists a linear utility function $s_w$ such that applying $s_w$ to $\mathcal{D}$ produces the ranking $\pi$. Importantly, this function $s_w$ does not involve any additive bonuses. We denote such a function as \texttt{s}\textsuperscript{-1}($\pi$).

If no such realizable linear function exists, it indicates that the observed ranking $\pi$ must have been generated using the black-box ranking process that includes some form of additive bonus. In this case, we aim to explain $\pi$ by constructing a linear function augmented with additive group-based bonuses. As part of this process, the dataset is partitioned into ``\emph{groups}'' and a fixed bonus is assigned to each group, such that a linear function with these bonuses can realize $\pi$.

Without any constraint on the number of groups, one can trivially assign each tuple to its own ``\emph{under-represented}'' group and learn arbitrary bonuses. To avoid such degenerate solutions, we restrict the number of groups to at most $g$.

\noindent\textbf{Realizable by bonus-aware linear functions}: Given a dataset $\mathcal{D}$, a ranking $\pi$, and a budget of $g+1$ groups, we say that $\pi$ is \emph{realizable by a bonus-aware linear function} if there exists a partition of the tuples into $g+1$ groups, where the majority group receives a bonus of $0$, such that a linear function $f_w$, combined with these group bonuses, produces the ranking $\pi$. We denote such a function as $f^{-1}_g(\pi)$.

We are now ready to formally introduce the two problems that form the focus of this work.

\paragraph{Problem formulation}

For our first problem setting, we consider finding a bonus-aware linear function that realizes a given ranking $\pi$ where the number of tuples that receive a bonus is at most $k$.

\begin{definition}[\arbitrary]
    Given a set of tuples $\mathcal{D} = \{t_1, t_2, ... t_ n \}$ where each $t_i \in \mathbb{R}^d$, a target ranking $\pi$ over the tuples and an integer $k$, does there exist a linear utility function $s_w$ with $w \in \mathbb{R}^d$ and a bonus vector $\textbf{v} \in \mathbb{R}^n$ with at most $k$ non-zero values such that sorting the tuples by $s_w(t_i) + v_i$ produces the ranking $\pi$? 
\end{definition}

In practice, an estimate of the number of tuples belonging to under-represented groups is often available. The \arbitrary\ problem captures the lower bound on the number of tuples that need to be allotted a bonus for a linear function to realize $\pi$. 

The number of under-represented groups is typically a small known number (usually constant), and thus it would be of interest to find a bonus-based linear utility function that respects the number of groups $g$. We now propose the \multigroup\ problem.

\begin{definition}[\multigroup]
    Given a set of tuples $\mathcal{D} = \{t_1, t_2, ... t_ n \}$ where each $t_i \in \mathbb{R}^d$, a ranking $\pi$ over $\mathcal{D}$ and integer $k$. Does there exist a weight vector $w \in \mathbb{R}^d$, bonus values $v_1, \ldots, v_g \in \mathbb{R}^+$, and disjoint subsets $G_1, \ldots, G_g \subseteq \mathcal{D}$ where $\sum_{i \in [g]} |G_i| \le k$, such that ranking $\pi$ can be realized by $s_w$ using $v_i$ as bonus values for tuples in $G_i$?
\end{definition}

Note, some tuples in $\mathcal{D}$ may not belong to $\cup_{i \in [g]} G_i$.

\section{Algorithmic Framework}

\subsection{Algorithmic Results for \arbitrary}
We start by considering the \arbitrary\ problem and provide an algorithm to solve this problem.

\begin{theorem}
\label{theorem:singletongroup-weight}
    There exists a deterministic algorithm that, given a set of tuples $\mathcal{D} = \{t_1, t_2, ... t_ n \}$ where each $t_i \in \mathbb{R}^d$, a ranking $\pi$ over $\mathcal{D}$ and an integer $k$, if there exists a satisfying solution to \arbitrary, returns a linear utility function $s_w$ with $w \in \mathbb{R}^d$ and a bonus vector $\textbf{v} \in \mathbb{R}^n$ with at most $k$ non-zero values such that sorting the tuples by $s_w(t_i) + v_i$ produces the ranking $\pi$, in time $\mathcal{O}(n^{2d+1} \log n)$.
\end{theorem}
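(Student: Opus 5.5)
The plan is to reduce the search over weight vectors $w$ to a finite enumeration. The key observation is that, for a fixed $w$, the \arbitrary\ question becomes purely combinatorial. Sort the tuples by $s_w$ to obtain an ordering $\sigma_w$. Any tuple that receives no bonus keeps its score $s_w(t_i)$. So the set $U$ of unbonused tuples must appear in the same relative order in $\sigma_w$ and in $\pi$, i.e., $U$ must be a common subsequence of $\sigma_w$ and $\pi$.

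The converse is the step I will prove carefully: if $U$ is a common subsequence, the remaining tuples can always be placed correctly with bonuses. Process the bonused tuples in $\pi$-order. Each such tuple $t$ lies in $\pi$ between two neighbours whose final scores are already fixed or determined. Choose $v_t$ so that $s_w(t)+v_t$ falls strictly between their final scores. If the paper's convention demands nonnegative bonuses, only unbonused tuples constrain things from below. Scores of bonused tuples can be pushed upward freely by assigning final scores from the bottom of $\pi$ upward, so that every final score exceeds the base score; there is always room because we are working over the reals. Once this holds, \arbitrary\ has a solution for $w$ if and only if $n-\mathrm{LCS}(\sigma_w,\pi)\le k$.

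Next, I will show that $\sigma_w$ takes only polynomially many values. For each pair $i<j$, define the comparison hyperplane $h_{ij}=\{w : w\cdot(t_i-t_j)=0\}$. Within any cell (face) of the arrangement of these $\binom n2=O(n^2)$ hyperplanes, restricted to the weight simplex, the ordering $\sigma_w$ is constant, with ties handled consistently. An arrangement of $m$ hyperplanes in $\mathbb{R}^d$ has $O(m^d)=O(n^{2d})$ faces. These faces, together with a representative interior point of each, can be enumerated deterministically, for instance by Edelsbrunner-style incremental construction or by enumerating vertices via $d$-subsets and perturbing. Ties inside lower-dimensional faces are harmless: we may restrict to full-dimensional cells, or note that an open cell adjacent to the face gives an ordering that refines the face's ordering.

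For each representative $w$, the algorithm sorts in $O(n\log n)$. It then computes $\mathrm{LCS}(\sigma_w,\pi)$. Since both sequences are permutations, the LCS reduces to a longest increasing subsequence of $\pi$-positions read in $\sigma_w$-order, which takes $O(n\log n)$. If $n-\mathrm{LCS}\le k$, the algorithm constructs $\mathbf v$ as above in $O(n)$ and outputs it. The total running time is $O(n^{2d})\cdot O(n\log n)=O(n^{2d+1}\log n)$.

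The main obstacle I expect is bookkeeping around arrangement enumeration and ties. I must ensure that the enumeration cost itself stays within $O(n^{2d+1}\log n)$ rather than dominating the bound. I must also ensure that a solution whose $w$ lies on some hyperplane is not missed. I will argue the latter by perturbation: if $w^*$ with bonuses realizes $\pi$ with strict order, then all strict final inequalities survive a small perturbation of $w^*$ into an adjacent open cell. Hence some enumerated representative $w$ admits a solution with the same unbonused set.
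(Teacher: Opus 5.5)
Your proposal is correct and follows the same route as the paper: enumerate the cells of the arrangement of the $\binom{n}{2}$ comparison hyperplanes, and in each cell compute $\mathrm{LCS}(\sigma_w,\pi)$ as an LIS in $O(n\log n)$. You use the generic $O(n^{2d})$ face bound where the paper uses the sharper central-arrangement bound $O(n^{2(d-1)})$, and both give $O(n^{2d+1}\log n)$. Your explicit construction of bonuses for the tuples outside the LCS, and your perturbation argument for an optimal $w^*$ lying on a hyperplane, fill in steps the paper's correctness proof leaves implicit. The one misstep is your aside on nonnegative bonuses, which is false: a bonused tuple that $\pi$ places below an unbonused tuple of smaller base score needs a negative bonus. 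It is irrelevant here, since \arbitrary\ allows $\mathbf{v}\in\mathbb{R}^n$.
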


We devote the rest of the subsection to proving the above theorem. Before describing our approach, we introduce the notion of ``\emph{\comphs}'' and then propose algorithms which explore regions of the weight-space based on \emph{\comphs}.

\begin{definition}[\Comphs] Consider two tuples $t_i\in \mathbb{R}^d$ and $t_j\in \mathbb{R}^d$. The set of all weight vectors $\mathcal{C}_{i,j}$ where the tuple $t_i$ has a score equal to $t_j$ is defined as a \comphs. Formally,  $\sum_{a=1}^d w_a\cdot t_i[a]=\sum_{a=1}^d w_a\cdot t_j[a]$.
\end{definition}

Given two tuples $t_i$ and $t_j$, a \comphs\ divides the $d$-dimensional weight space into two parts such that for any weight vector in $\mathcal{C}_{i,j}^+$, $t_i$ is better than $t_j$, i.e., 

\vspace{-3mm}
\begin{equation}\label{eqn:cij-plus}
    \mathcal{C}_{i,j}^+=\{w\in \mathbb{R}^d\ |\ \sum\nolimits_{a=1}^d w_a \cdot t_i[a]>\sum\nolimits_{a=1}^d w_a \cdot t_j[a]\},
\end{equation}
\vspace{-3mm}

\noindent and for any weight vector on the other side $\mathcal{C}_{i,j}^-$, $t_j$ is ranked better than $t_i$.

Our idea stems from a simple observation that, while there are $n!$ orderings of $n$ tuples, in $d$-dimensional space, not all of these are realizable through linear utility functions. Instead, we will soon see that only $n^{2d-2}$ rankings are realizable. We then exploit this property to design an efficient algorithm for both problems.

\paragraph{Algorithm for {\arbitrary}}
We first present our approach, and then provide an argument about its correctness. Consider the $d$-dimensional weight-space. For every pair of tuples $t_i$, $t_j$ introduce the \comphs\ $\mathcal{C}_{i,j}$ in this space. We term the set of these $\binom{n}{2}$ hyperplanes as $\mathcal{H}$. Introduction of $\mathcal{H}$ to the $\mathbb{R}^d$ dimensional weight-space partitions it into numerous regions, which is termed the arrangement of hyperplanes~\cite{edelsbrunner1987algorithms}. As each weight-function in this space lies either below or above the \emph{\comphs s}, the ranking that is realized corresponds to a unique ranking. 

The set of all functions that have the same ranking in the arrangement can be shown to be a convex region and corresponds to a ``\emph{cell}" of the arrangement. To formalize this, consider a vector of $\binom{n}{2}$ boolean signs corresponding to each utility function termed as the \emph{location vector}. A location vector is of the form $\{+,-\}^{\binom{n}{2}}$. For example, the $(i,j)$-th location contains the sign $-$ or $+$ which represents the location of the weight vector with respect to $\mathcal{C}_{i,j}$, \emph{i.e.}, if the sign is $+$, then the tuple $t_i$ is scored better than $t_j$. A similar argument follows for $-$. The set of ranking functions with the same location vector can be found by the intersection of $\binom{n}{2}$ half-spaces; the set of ranking functions is convex. Also, note that this corresponds by definition to the cell of an arrangement~\cite{edelsbrunner1987algorithms}.

Based on this, our approach to solve \arbitrary\ is to enumerate the regions in space formed by $\mathcal{H}$, and in turn, enumerate the realizable rankings, $\mathcal{R}$. For every ranking $\rho\in \mathcal{R}$, we compute the least number of tuples that need to be allotted a bonus to obtain the ranking $\pi$, which is equivalent to finding a \emph{Longest Common Subsequence (LCS)} between $\pi$ and $\rho$. Since both the inputs to the LCS problems are rankings, it is well-known that the problem is essentially the same as finding a \emph{Longest Increasing Subsequence (LIS)} in $\rho$ where the reference ranking is $\pi$ -- by renaming the tuples in ascending order based on $\pi$, \emph{i.e.}, we give the tuple located at $\pi[0]$ a new number of $t_0$ and so forth. 

The tuples that are not part of the LIS need to be allotted a bonus so that the linear ranking function $s^{-1}(\rho)$ can realize $\pi$. Over the set of rankings $\rho\in \mathcal{R}$, our algorithm outputs the linear function that had the longest value for the increasing subsequence. The tuples that are not part of the subsequence need to be added with a bonus to obtain $\pi$.  We present the pseudo-code for our approach in \texttt{\singAlg} (Algorithm~\ref{algo:singleton-groups}).

\begin{algorithm}[t]

\caption{\texttt{\singAlg} : \textsc{Enumerate Rankings and Minimize Bonuses} }\label{algo:singleton-groups}
\begin{algorithmic}[1]
\small
\Procedure{ERMB}{Dataset $\mathcal{D}$, target ranking $\sigma$}
\State Initialize $\mathcal{H} \gets \emptyset$
\ForAll{pairs $(t_i, t_j)$ where $1 \le i < j \le n$}
    \State Add comparison hyperplane $\mathcal{C}_{i,j}$ to $\mathcal{H}$
\EndFor
\State Enumerate regions formed by the arrangement of $\mathcal{H}$
\State Let $\mathcal{R} \gets$ all rankings $\rho$ realizable in each region of $\mathcal{H}$
\State Initialize $\rho^* \gets \texttt{null}$,\ $\text{maxLCS} \gets 0$,\ $\mathcal{G}\gets \texttt{null}$
\ForAll{rankings $\rho \in \mathcal{R}$}
    \State Compute $L \gets \text{LCS}(\rho, \pi)$
    \If{$|L| > \text{maxLCS}$}
        \State $\text{maxLCS} \gets |L|$; \ \ 
        \State $\rho^* \gets \rho$;\ \ 
        \State $\mathcal{G}\gets [n]-L$;
    \EndIf
\EndFor
\State \Return linear function $s^{-1}(\rho^*)$, $\mathcal{G}$
\EndProcedure
\end{algorithmic}
\end{algorithm}

\paragraph{Analysis of \texttt{\singAlg}} 
The analysis of our algorithm relies on two steps: (i) enumeration of the cells of the arrangement, and (ii) LIS of the sequences $\rho$. To prove the complexity of the enumeration step,  
we rely on 
Corollary 28.1.2 of \cite{toth2017handbook}. The corollary states that the number of such regions with $n$ hyperplanes in $d$-dimensional space is bounded by $\mathcal{O}(n^{d})$. We state the Corollary below for easy access.

\begin{proposition}~\cite{toth2017handbook}
    The maximum combinatorial complexity of an arrangement of $n$ hyperplanes in $\mathbb{R}^d$ is $\mathcal{O}(n^d)$. Moreover, if the arrangement is simple, its complexity is indeed $\Theta(n^d)$. %
\end{proposition}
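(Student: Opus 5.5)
The plan is to count faces of every dimension by reducing to counting full-dimensional cells, and to bound cells by a deletion–restriction induction. For an arrangement $\mathcal{A}$ of $n$ hyperplanes in $\mathbb{R}^d$, let $c(\mathcal{A})$ be the number of $d$-dimensional cells. Write $\Phi_d(n)=\sum_{i=0}^{d}\binom{n}{i}$.

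\textbf{Step 1: cells.} I will show by induction on $n+d$ that $c(\mathcal{A})\le \Phi_d(n)$. The base cases are immediate: with $n=0$ there is one cell, and with $d=0$ there is a single point. For the inductive step, remove one hyperplane $H$ to get $\mathcal{A}'$. Adding $H$ back splits exactly those cells of $\mathcal{A}'$ that $H$ crosses, and each such cell splits into two. The crossed cells are in bijection with the cells of the restricted arrangement $\mathcal{A}'|_H$. This is an arrangement of at most $n-1$ hyperplanes inside $H\cong\mathbb{R}^{d-1}$; parallel or coincident intersections only reduce the count. Hence
\[
c(\mathcal{A})=c(\mathcal{A}')+c(\mathcal{A}'|_H)\le \Phi_d(n-1)+\Phi_{d-1}(n-1)=\Phi_d(n),
\]
by Pascal's rule. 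Since $\Phi_d(n)=O(n^d)$ for fixed $d$, the cell count is $O(n^d)$.

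\textbf{Step 2: faces of all dimensions.} Every $k$-dimensional face $F$ of $\mathcal{A}$ has an affine hull, which is a $k$-flat $L$. This flat is the intersection of some $d-k$ hyperplanes of $\mathcal{A}$; choosing a minimal defining subset makes the choice well defined up to the flat. Moreover, $F$ is a $k$-dimensional cell of the induced arrangement $\{H\cap L : H\in\mathcal{A},\,L\not\subseteq H\}$ inside $L$. There are at most $\binom{n}{d-k}$ such flats. By Step 1 applied in dimension $k$, each flat carries at most $\Phi_k(n)$ cells. Therefore the total number of faces is at most
\[
\sum_{k=0}^{d}\binom{n}{d-k}\Phi_k(n)=\sum_{k=0}^{d}O(n^{d-k})\,O(n^{k})=O(n^d).
\]
Incidences between faces can be bounded the same way, since each face has $O(1)$ hyperplanes in its defining set per level, up to constants depending on $d$. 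This gives the $O(n^d)$ bound on combinatorial complexity.

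\textbf{Step 3: the simple case.} Simplicity means every $d$ hyperplanes meet in exactly one point and no $d+1$ hyperplanes share a point. In this case the map from $d$-subsets of $\mathcal{A}$ to vertices is well defined and injective. So there are exactly $\binom{n}{d}=\Theta(n^d)$ vertices, which gives the matching lower bound on complexity; the upper bound comes from Step 2. Alternatively, in the simple case every inequality in Step 1 is tight, so the number of cells equals $\Phi_d(n)$ exactly.

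The main obstacle I expect is in Step 1: justifying rigorously that the cells of $\mathcal{A}'$ crossed by $H$ correspond bijectively to the cells of $\mathcal{A}'|_H$. The argument is that a cell $C$ is convex, so $C\cap H$ is a relatively open convex set. This set is exactly one cell of the restriction, because no hyperplane of $\mathcal{A}'$ separates two points of $C$. Conversely, each cell of the restriction lies in a unique cell of $\mathcal{A}'$. Degenerate situations, such as a hyperplane parallel to $H$ or coinciding with it, must be handled by noting that they only lower the count, so the inequality survives.
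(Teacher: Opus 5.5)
Your argument is correct, but it necessarily takes a different route from the paper, because the paper gives no proof. It imports the statement as Corollary~28.1.2 of the \emph{Handbook of Discrete and Computational Geometry}. The standard derivation behind that corollary has two parts. First, it computes the exact face count of a \emph{simple} arrangement, $f_k=\sum_{i=0}^{k}\binom{d-i}{k-i}\binom{n}{d-i}$. Second, it uses the fact that degenerate arrangements have no more faces of any dimension than simple ones (via perturbation).

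You instead bound an arbitrary, possibly degenerate, arrangement directly. Deletion--restriction gives at most $\Phi_d(n)$ cells; your bijection between the crossed cells of $\mathcal{A}'$ and the cells of $\mathcal{A}'|_H$ is argued correctly, given that the hyperplanes of $\mathcal{A}$ are distinct. For the other dimensions, you charge each $k$-face to its affine hull. That hull is one of at most $\binom{n}{d-k}$ intersection flats, and each such flat carries at most $\Phi_k(n)$ cells.

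This is self-contained and never needs the perturbation step. What you give up is the exact $f$-vector: you recover exact counts only for cells and vertices in the simple case, although $\binom{n}{d-k}\Phi_k(n)$ has the same leading term as $f_k$. Your recurrence also adapts verbatim to what the paper actually uses. For a central arrangement of $m$ hyperplanes, take the base case ``one hyperplane through the origin gives two cells.'' The recurrence then yields $2\sum_{i=0}^{d-1}\binom{m-1}{i}=O(m^{d-1})$ cells. With $m=\binom{n}{2}$, this is the $O(n^{2(d-1)})$ region count in the analysis of \texttt{ERMB}, which the paper otherwise takes from Schneider.

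One sentence should be fixed or dropped. You claim incidences are bounded ``since each face has $O(1)$ hyperplanes in its defining set.'' This is false for degenerate arrangements: $n$ concurrent lines give one vertex incident to $2n$ edges and $2n$ cells. The remark is also unnecessary, because under the usual definition combinatorial complexity is the total number of faces of all dimensions, which your Step~2 already bounds. If you do want face--facet incidences, charge each pair to the flat of the \emph{larger} face instead. Inside a $k$-flat $L$, every $(k-1)$-face of the induced arrangement bounds exactly two of its cells. There are at most $n\,\Phi_{k-1}(n)$ such faces in $L$. Summing over the at most $\binom{n}{d-k}$ flats gives $O(n^{d-k}\cdot n^{k})=O(n^d)$ in total.
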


Observe that each hyperplane $C_{i,j}$ as shown in Equation~\ref{eqn:cij-plus} has an intercept of $0$ and hence passes through the origin. Therefore, the resulting arrangement is a special case of a hyperplane arrangement known as a central arrangement (see \cite{schneider2022convex} for details). In a central arrangement, the combinatorial complexity of a central arrangement with $m$ hyperplanes in $d$ dimensions is $\mathcal{O}(m^{d-1})$ (see \cite{schneider2022convex} Chapter~5). Thus, in our case, $n^2$ hyperplanes in $ d$-dimensional weight space form a central arrangement, which bounds the number of unique rankings realizable to $\mathcal{O}(n^{2(d-1)})$. This helps us in establishing $\mathcal{O}(n^{2d+1} \log n)$ bound on the time complexity (see Appendix~\ref{sec:app-ermb}). %

Next, we show the correctness of our algorithm \texttt{\singAlg} (Algorithm~\ref{algo:singleton-groups}), which will complete the proof of Theorem~\ref{theorem:singletongroup-weight}.

\begin{lemma}
    \label{theorem:arbitraryk-weights}
    \texttt{\singAlg} (Algorithm~\ref{algo:singleton-groups}) returns a linear utility function with weight vector $w \in \mathbb{R}^d$ which solves the \arbitrary\ problem.
\end{lemma}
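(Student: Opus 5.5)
The proof has two directions: \emph{soundness} (whatever \texttt{\singAlg} returns is a valid \arbitrary\ solution) and \emph{completeness/optimality} (if some valid solution with at most $k$ bonuses exists, the returned set $\mathcal{G}$ has size at most $k$). Both rest on one combinatorial lemma. For any weight vector $w$ inducing ranking $\rho$, a set $S$ of tuples can be left without bonus (bonus $0$) while bonuses on the complement realize $\pi$ if and only if $S$ is a common subsequence of $\rho$ and $\pi$. Consequently, the minimum number of bonused tuples for this $w$ is $n-|\mathrm{LCS}(\rho,\pi)|$, and this depends only on the cell of the arrangement $\mathcal{H}$ containing $w$.

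\textbf{Key lemma.} For the ``only if'' direction, suppose $S$ receives zero bonus. Tuples of $S$ are then compared only through $s_w$, so their relative order in $\pi$ equals their order in $\rho$, and $S$ is a common subsequence.

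For the ``if'' direction, take a common subsequence $L$ and assign scores by walking along $\pi$. Tuples of $L$ keep $s_w(t)$. Each tuple $t\notin L$ gets $v_t$ chosen so that $s_w(t)+v_t$ lies strictly between the adjusted scores of its $\pi$-neighbours. This works with strict inequalities because the fixed scores of $L$ appear in $\pi$ in strictly decreasing order, since $L$ is increasing in $\rho$ and $w$ is in an open cell with no ties. So between consecutive members of $L$ there is an open interval. The non-$L$ tuples lying between them in $\pi$ can be spaced evenly inside it, and those before the first or after the last member of $L$ are placed above or below all of $L$. Bonuses here are arbitrary reals, as the definition allows, though positivity could also be arranged by a shift argument if required.

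\textbf{Reduction to cells.} Take any solution $(w^\star,\mathbf{v}^\star)$ with at most $k$ nonzero bonuses. Its zero-bonus tuples $S$ appear in $\pi$ in an order consistent with $s_{w^\star}$. If $w^\star$ lies on some hyperplanes $\mathcal{C}_{i,j}$, ties among $S$ are impossible, because $\pi$ is a strict ranking. So for $i,j\in S$ the sign of $s_{w^\star}(t_i)-s_{w^\star}(t_j)$ is strict. The set of weights satisfying these finitely many strict inequalities is therefore an open set containing $w^\star$, hence it intersects some full-dimensional cell $C$ of $\mathcal{H}$, which we may take within the positive simplex orthant. Its ranking $\rho_C\in\mathcal{R}$ has $S$ as a common subsequence with $\pi$. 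Since \texttt{\singAlg} maximises $|\mathrm{LCS}(\rho,\pi)|$ over all $\rho\in\mathcal{R}$, the returned $\mathcal{G}$ satisfies $|\mathcal{G}|\le n-|S|\le k$. The key lemma then produces the bonus vector for $s^{-1}(\rho^\star)$, which proves soundness.

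\textbf{Main obstacle.} I expect the main obstacle to be this degeneracy handling: showing that restricting to open cells and the weight normalisation loses nothing. I would handle it by the perturbation argument just given, using the fact that the strict inequalities needed are only those among unbonused tuples. The remaining details are routine. One is that enumeration yields every realizable $\rho$, one per cell. Another is that any interior point of the optimal cell serves as $s^{-1}(\rho^\star)$.
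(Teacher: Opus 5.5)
Your proof is correct and follows the paper's argument. As in the paper, you locate the cell of an optimal $w^\star$, observe that its unbonused tuples form a common subsequence of $\rho$ and $\pi$, and conclude by maximality of the LCS; you also supply two steps the paper leaves implicit, namely that any common subsequence can be completed to a valid bonus vector, and the perturbation of $w^\star$ off the comparison hyperplanes.

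One small correction: drop the aside that nonnegative bonuses could be arranged by a shift. With nonnegative bonuses, a bonused tuple that $\pi$ places below an unbonused tuple of smaller base score cannot be accommodated. This does not affect your proof, because \arbitrary\ allows any $\mathbf{v}\in\mathbb{R}^n$.
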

\begin{proof}
    Suppose vectors $w^*$, $v^*$ represent the optimal weight vector and optimal bonus values, respectively. The linear function $f_{w^*}$ with bonus of $v^*$ orders $t_i \in \mathcal{D}$ by decreasing value such that the ordering matches $\pi$ and $v^*$ is the most sparse among all such solutions.

    Now, consider the region of the arrangement of $\mathcal{C}_{i,j}$ where the vector $w^*$ lies in. Let $w$ be the vector in this region that our algorithm picks, and let $\rho$ be the resulting ranking from ordering $t \in \mathcal{D}$ in decreasing order of their scores. For any pair of points $t_1, t_2 \in \mathcal{D}$ that receive no bonus from $v^*$, it is clear that $w \cdot t_1 > w \cdot t_2$ if and only if $w^* \cdot p_1 > w^* \cdot t_2$. Therefore, both $\rho$ and $\pi$ must have the same total order over points in $\mathcal{D}$ that receive no bonus from $v^*$. These points are a lower bound on the length of the LCS of $\pi$ and $\rho$, so our approach must find a $w$, $v$ which satisfies \arbitrary\ with $v$ at least as sparse as $v^*$.
\end{proof}

\noindent \textbf{Improved Running Time for $2$-dimensions.} We remark that in the special case of $d = 2$, the enumeration step (line 4 of Algorithm~\ref{algo:singleton-groups}) can be performed very efficiently as the weight space is now $1$-dimensional. Therefore, each comparison hyperplane is a point partitioning the line. Enumeration of the regions can thus be done in $\mathcal{O}(n^2 \log n)$ time by sorting the comparison hyperplanes, followed by a linear scan to obtain the ranking and then performing a LIS on each ranking, leading to an overall running time of $\mathcal{O}(n^3 \log n)$ (compared to an worse bound of $\mathcal{O}(n^5 \log n)$ implied from Theorem~\ref{theorem:singletongroup-weight}). We use this faster implementation in our experiments.

\subsection{Algorithm for {\multigroup}}
The budgeted group setting is more intriguing as Algorithm~\ref{algo:singleton-groups} cannot directly be adapted to this setting, because the notion of a fixed number of groups is not modeled in \arbitrary. In this section, we first embed the idea of \emph{bonus} into the weight space and then design a polynomial-time algorithm for the fixed dimension, fixed group setting. 

We treat group bonuses as \emph{bonus dimensions}, expanding the feature space to $d+g$ dimensions -- one additional dimension per group. Then, to handle group membership, the algorithm generates $g+1$ copies of each tuple: one base version and one for each group membership (whether it belongs to a particular group or not). It then uses hyperplane arrangements in this augmented space to enumerate all possible rankings of the expanded dataset. Unlike simpler cases that use Longest Increasing Subsequence, this approach requires us to employ a budgeted Longest Common Subsequence (LCS). A dynamic programming routine aligns the target ranking $\pi$ with these candidate rankings to find the optimal solution that strictly respects the group size constraints. We provide a detailed description of the algorithm and analysis in Appendix~\ref{sec:app-gblr}.

\section{Hardness Results}
In this section, we show that the problem of linear realizability of a ranking with additive group bonuses is \texttt{NP}-hard.

\begin{theorem}\label{thm:np-hard}
{\arbitrary} is \texttt{NP}-hard.
\end{theorem}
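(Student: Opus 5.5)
The plan is a polynomial-time reduction from \maxsat, which is \texttt{NP}-hard. The first step is to restate \arbitrary\ combinatorially. A tuple with an arbitrary real bonus can be placed anywhere in the order, so an instance is a yes-instance if and only if there is a weight vector $w$ and a set $S\subseteq\mathcal{D}$ with $|S|\ge n-k$ such that sorting $S$ by $s_w$ agrees with $\pi$ restricted to $S$. This is the same observation that underlies Lemma~\ref{theorem:arbitraryk-weights}. Hardness therefore amounts to this: given points and a target order, decide whether some linear functional is consistent with $\pi$ on all but $k$ points. I work with $w\in\mathbb{R}^d$ unrestricted, as in the definition of \arbitrary. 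If needed, sign constraints can be recovered afterwards by translating coordinates, using the reduction in Appendix~\ref{sec:app-positiveweight}.

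Given a \maxsat\ formula on variables $x_1,\dots,x_m$ with clauses $C_1,\dots,C_r$ and a target number $q$ of satisfied clauses, I build the following instance.
\begin{enumerate}
\item Use one coordinate per variable and one extra ``separator'' coordinate $0$.
\item Arrange the tuples of $\pi$ into consecutive blocks. Offset the blocks from one another by a huge constant $M$ in coordinate $0$.
\item Make many copies of an anchor pair that differ only in coordinate $0$. Enforce $w_0>0$ through these anchors, and add $B$-fold replicated gadgets that bound $|w_x|\le 1$. Deleting these would cost more than $k$.
\end{enumerate}
Together these give that every cross-block pair is automatically ordered correctly. Consequently, the number of deletions decomposes as a sum over blocks.

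The remaining gadgets encode the formula.
\begin{itemize}
\item \emph{Variable gadgets.} For each $x_i$ there are two replicated block types, one pushing $w_i\ge 1$ and one pushing $w_i\le -1$. Each is realized by a pair $(a,b)$ adjacent in $\pi$ with $a-b=\pm e_i+e_0\epsilon$. Exactly one of these pair types must lose one tuple per copy. I read $x_i=\text{true}$ iff $w_i>0$, and this fixes a baseline deletion cost of $mB$.
\item \emph{Clause gadgets.} For a clause $\ell_1\vee\ell_2$, use a block of three or four tuples whose pairwise differences are $\pm e_{i}$, $\pm e_{j}$ and $\pm(e_i+e_j)$, with signs chosen to match the literal polarities. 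The aim is that the block's longest consistent sub-order has size $s$ when some literal is true and $s-1$ when both are false.
\end{itemize}
The threshold is then $k = mB + (\text{const})\cdot r - q$. Correctness in the forward direction takes a satisfying assignment with $q$ satisfied clauses and sets $w_i=\pm 1$ accordingly. In the reverse direction, the replication forces $|w_i|\ge 1$ with a definite sign, and each block's cost then reveals whether its clause is satisfied.

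\textbf{Main obstacle.} The hardest step is designing the clause gadget. The feasibility of an order on a block is an intersection of halfspaces, which is convex, while a clause requires a disjunction. The disjunction must therefore come from the freedom to choose which tuple to delete. Concretely, I will try a triple $a,b,c$ ordered $a\succ b\succ c$ in $\pi$ with $a-b=\ell_1$-direction and $b-c=\ell_2$-direction, and with $a-c$ chosen so that $w(a-c)>0$ holds automatically under the bounds $|w_i|\le 1$. Deleting $b$ then always leaves a consistent pair, whereas keeping all three requires both literals to be true. This counts ``both true'' rather than ``at least one true.'' If that happens, I will switch the source problem to \maxunisat\ or \maxcut\ ($w_i\ne w_j$ in sign), whose constraints such 3-tuple gadgets may match exactly. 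I expect to verify the correct source problem by a small case check on the block's LIS values.
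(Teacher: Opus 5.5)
Your proposal has a genuine gap. The clause gadget, which is the entire content of the reduction, is never constructed, and the triple you sketch cannot work.

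In a block $a\succ b\succ c$ you cannot choose $a-c$ independently, because $a-c=(a-b)+(b-c)$. In particular, $w(a-c)<0$ whenever $w(a-b)<0$ and $w(b-c)<0$, so ``deleting $b$ always leaves a consistent pair'' is false. If the two consecutive pairs track the two literals, the block's minimum number of deletions is exactly $2,1,0$ when $0,1,2$ literals are true, whatever $e_0$-offsets you add. That cost is additive over literal occurrences, so the total is a linear function of the assignment and carries no clause information. You therefore get neither the AND gadget nor the OR gadget. The fallback of switching the source problem after a case check is not yet a proof.

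There are two smaller defects.
\begin{itemize}
\item Every gadget imposes a homogeneous constraint $w\cdot(a-b)>0$, and rankings are invariant under scaling $w$. So ``$|w_x|\le 1$'' must be the relative bound $|w_x|\le w_0$.
\item Your variable gadget $a-b=\pm e_i+\epsilon e_0$ keeps both pair types whenever $|w_i|<\epsilon w_0$. It therefore forces no sign and no baseline cost $mB$; you need $-\epsilon e_0$ instead.
\end{itemize}

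The missing idea is this: a kept pair gives one halfspace, and knowing only the signs of the $w_i$ is too coarse for one halfspace to express a two-literal clause. You need either an equality or pinned magnitudes.

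Your block framework can be repaired along the second route, and your intended bounds ($|w_x|\le1$ together with $w_i\ge1$ or $w_i\le-1$) are close to it. They must be made relative to $w_0$ and given slack, since strict versions of them contradict each other.
\begin{itemize}
\item Replicate the pairs $(1+\delta)e_0\pm e_i$ more than $k$ times each, so they cannot be deleted.
\item Add $B$ copies each of the soft pairs $\pm e_i-(1-\delta)e_0$, so each variable costs at least $B$.
\end{itemize}
Within the budget, these force $(1-\delta)w_0<|w_i|<(1+\delta)w_0$. A single pair with $a-b=\sigma_1e_i+\sigma_2e_j+e_0$ then has $w(a-b)/w_0$ within $2\delta$ of $3$, $1$ or $-1$. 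For $\delta<1/2$ this pair is consistent if and only if the clause is satisfied, which gives a \maxsat\ reduction.

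The paper takes the equality route, which is much simpler. It reduces from \maxunisat, using one point $p_i\in\{0,\pm1\}^n$ per clause. Each $p_i$ is placed in $\pi$ between runs of $n^2$ copies of the origin. Every origin copy scores $0$, and deleting enough of them exceeds the budget, so keeping $p_i$ forces $s_w(p_i)=0$. This means $w_a=-w_b$, or $w_a=w_b$ when exactly one literal is negated. With $w_j\neq0$ and the assignment read off the signs of $w$, that is exactly the one-in-two condition. Because the forced constraint is homogeneous, no separator coordinate, variable gadgets or normalization are needed, and the budget is simply the number of clauses minus $r$. This construction uses the fact that tied scores may be listed in either order, which the paper's model allows.
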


To show the above, we provide a reduction from a variant of the classical {\maxsat}, namely {\maxunisat}. 

\begin{definition}[\maxunisat]
    Given a 2-CNF formula, and a non-negative integer $r$, the problem is to decide whether there exists an assignment of variables that satisfies at least $r$ clauses by making \emph{exactly one} literal (out of two literals) in each of these $r$ clauses to be true.
\end{definition}

The above problem is known to be \texttt{NP}-hard by a straightforward reduction from {\maxcut} -- another classical \texttt{NP}-hard problem. In fact, the problem remains \texttt{NP}-hard even for monotone formulas.

\begin{theorem}[Folklore]\label{thm:np-hard-sat}
{\maxunisat} is \texttt{NP}-hard.
\end{theorem}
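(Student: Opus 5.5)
The plan is a reduction from \maxcut, which is \texttt{NP}-hard. Take an instance consisting of a graph $G=(V,E)$ and an integer $r$. Build a monotone 2-CNF formula $\phi$ on the variables $\{x_v : v\in V\}$ with one clause $(x_u \vee x_v)$ for each edge $\{u,v\}\in E$. Keep the same threshold $r$. This construction takes polynomial time. Correctness should then follow from a direct equivalence between cuts and assignments.

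First, the forward direction. Given a cut $(S, V\setminus S)$ with at least $r$ crossing edges, set $x_v=\texttt{true}$ exactly when $v\in S$. For a crossing edge $\{u,v\}$, exactly one of $u,v$ lies in $S$. So exactly one literal of its clause $(x_u\vee x_v)$ is true, and at least $r$ clauses are satisfied in the ``exactly one'' sense.

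Second, the reverse direction. Given an assignment under which at least $r$ clauses have exactly one true literal, let $S=\{v : x_v=\texttt{true}\}$. A clause $(x_u\vee x_v)$ has exactly one true literal iff $x_u\neq x_v$, which happens iff the edge $\{u,v\}$ crosses the cut $(S,V\setminus S)$. Hence the cut has at least $r$ crossing edges. The two directions together show that the instances are equivalent. The construction produces only monotone formulas, so this also yields the stated strengthening for monotone formulas.

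The argument has no real obstacle. The one point to check is that the ``exactly one literal true'' condition for a positive two-literal clause is precisely the XOR condition $x_u\neq x_v$. A clause with both literals true must not be counted, and such clauses correspond exactly to non-crossing edges with both endpoints in $S$. Clauses with both literals false correspond to non-crossing edges with both endpoints outside $S$, and they are likewise not counted.
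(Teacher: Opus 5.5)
Your proposal is correct and is the same reduction the paper has in mind: the paper only asserts that the hardness follows ``by a straightforward reduction from \maxcut'' and that it holds even for monotone formulas. Your construction, with one clause $(x_u \vee x_v)$ per edge and the observation that ``exactly one literal true'' means $x_u \neq x_v$, which is exactly the condition for the edge to cross the cut, supplies the details the paper omits, including the monotone strengthening.
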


\noindent \textbf{Reduction from {\maxunisat}. }Suppose we are given an instance of the {\maxunisat} problem: A 2-CNF formula $F$ over $n$ variables $x_1,\cdots, x_n$, with $m$ clauses $C_1,C_2,\cdots, C_m$ (without loss of generality assume that $m < n^2$), and a non-negative integer $r$. Then we construct an instance of the decision version of the {\arbitrary} problem as follows: For each clause $C_i$, create a point $p_i \in \mathbb{R}^n$ by setting, %

\vspace{-3mm}
\[
p_i[j] = \begin{cases}
    1 \quad \quad \quad \text{if }x_j \text{ appears in }C_i\\
    -1 \quad\quad \; \text{if }\bar{x}_j \text{ appears in }C_i\\
    0 \quad \quad \quad \text{otherwise}.
\end{cases}
\]
\vspace{-1mm}

Further, take $\ell = (m+1)n^2$ additional points: $q_1=\cdots=q_{\ell}=0^n$.
Consider the dataset $\mathcal{D}:= \cup_{i=1}^m \{p_i\} \bigcup \cup_{i=1}^{\ell} \{q_i\}$. Then, consider the following ranking
$\pi:= q_1 \preceq \cdots \preceq q_{n^2} \preceq p_1 \preceq q_{n^2+1} \preceq \cdots \preceq q_{2n^2} \preceq p_2 \preceq \cdots \preceq q_{\ell - 2n^2 + 1} \preceq \cdots \preceq q_{\ell - n^2} \preceq p_m \preceq q_{\ell - n^2 + 1} \preceq \cdots \preceq q_{\ell}$.

The instance of the decision version of the {\arbitrary} problem would constitute $\mathcal{D}$, ranking $\pi$, and the non-negative integer $m-r$. We defer the soundness and completeness proof to Appendix~\ref{sec:app-NP-hard}.

\begin{remark}
    It is worth noting that {\arbitrary} is a special case of the more general {\multigroup} problem, where each group could be of arbitrary size. Thus \texttt{NP}-hardness of {\arbitrary} immediately establishes a similar hardness for {\multigroup}. Furthermore, one may note that our reduction works even when there are only two (non-singleton) groups with bonuses.
\end{remark}

\section{Scalable Frameworks via MILP} \label{sec:ilp}

The geometric algorithm introduced earlier provides an exact, constructive way of identifying valid rankings through hyperplane arrangements. However, such geometric reasoning scales poorly with the number of data points $n$ and dimensions $d$, since the number of combinatorial regions grows exponentially. Even for moderate $d$, enumerating or traversing these regions becomes practically intractable. To design a more scalable alternative, we reformulate {\multigroup} problem as a \textbf{Mixed Integer Linear Program (MILP)} — a mature framework that allows us to express logical ordering and grouping constraints in a linear-algebraic form, solvable by powerful off-the-shelf engines.

The MILP formulation represents ranking relationships through a set of linear inequalities defined over two classes of variables — continuous feature weights $w$ and binary membership indicators $\delta_{ir}$. Each inequality enforces the required order between tuples in terms of their weighted feature scores and group bonuses. This formulation connects combinatorial ranking constraints with continuous optimization, allowing discrete ordering relations to be reasoned about using standard MILP solvers. 

\subsection{{\ilpbase} Formulation}
We first define the base formulation, denoted as \textbf{{\ilpbase}}, which represents the canonical encoding of the ranking constraints:

\vspace{-3mm}
\begin{equation*}
\begin{array}{lll}
\text{minimize} & 0 & \\
\text{subject to} 
& \sum_{j \in [d]} (t_i[j] - t_{(i+1)}[j]) \cdot w_j  + \sum_{r \in [g]} \delta_{ir} \cdot v_r & \\
& - \sum_{r \in [g]} \delta_{(i+1)r} \cdot v_r \ge 0 & \forall i < n  \\
& \sum_{r \in [g]} \delta_{ir} \le 1 & \forall i \in [n]  \\
& \sum_{i \in [n]} \sum_{r \in [g]} & \delta_{ir} \le k \\
& \delta_{ir} \in \{0, 1\},  v_r \ge 0 \quad \quad \quad \quad \quad \quad\ \ \ \ \ \ \ \forall i \in [n],\ &r \in [g] \\
& |w_j| \geq 1 & \forall j \in [d] 
\end{array}
\end{equation*}
\vspace{-1mm}

Here, $w \in \mathbb{R}^d$ denotes the feature weights to be learned, $v_r$ denotes the additive group bonus for group $G_r$. We introduce \( g \cdot n \) binary indicator variables \(\delta_{ir}\) for \( i \in [n] \) and \( r \in [g] \), where
$\delta_{ir}$ is 1 if tuple $t_i$ belongs to group $G_r$, otherwise $0$. The constraints encode ordering, group capacity, uniqueness, non-negativity, and weight-magnitude requirements. The objective is to minimize 0 (i.e., this is a feasibility problem), which means the ILP seeks any feasible solution satisfying all constraints that respects the ranking $\pi$.

Notice that the {\ilpbase} formulation remains generic and domain-agnostic: it treats all features and groups symmetrically, without leveraging problem-specific insights. In many real-world settings, additional knowledge—such as bounds on group influence, or on pre-identified boosted tuples—can meaningfully restrict the feasible space and guide the solver toward more computationally efficient solutions. Incorporating such application-driven structure motivates a refined formulation, {\ilprefined}, described next.

\subsection{{\ilprefined} Formulation}
\textbf{{\ilprefined}} -- The refined formulation tightens the feasible region by incorporating domain/application-informed assumptions and introducing solver-oriented constraints that eliminate redundant solutions and enhance numerical stability.

{\ilprefined} introduces the following enhancements:
\begin{enumerate}[leftmargin=*]
    \item \textbf{Strict ordering with tolerance:}  
    To ensure a clear and stable ranking, we introduce a small positive margin~$\epsilon$ between consecutive tuples. This enforces a strictly decreasing order of adjusted scores rather than allowing equality, thereby avoiding ties and improving numerical stability during optimization. We adopt a strategy similar to that of prior work~\cite{chen2025synthesizing}. In our experiments, we set $\epsilon$ as $10^{-5}$, which ensured sufficient spacing between tuple scores, thus accommodating floating-point tolerances. 
    
    \item \textbf{Bounded additive bonuses:}  
    Each group’s additive bonus is constrained within a predefined range to prevent it from dominating the linear utility function. These bounds capture realistic limitations on how much a group-specific adjustment can influence the overall ranking. We emphasize this bound is not a restriction of {\ilprefined}, and can be changed based on expectations of the range using domain knowledge.

    \item \textbf{Non-negative feature weights:}   In many real-world settings, such as performance evaluation or credit scoring, features contribute positively to the final score. Accordingly, all feature weights are constrained to be non-negative, reflecting domain semantics and eliminating redundant sign symmetries in the optimization process. We emphasize that this assumption is not a limitation of {\ilprefined}; if a feature is known to contribute negatively to the final utility, one can appropriately scale the feature values to ensure this.

    \item \textbf{Dominated tuples:} Under the assumption that feature weights are non-negative, we use the standard idea of dominance followed in monotone functions~\cite{borzsony2001skyline}. If it is observed that there exists a pair of tuples where $t_j$ dominates $t_i$, yet $t_i$ is ranked before $t_j$, then it is guaranteed that $t_i$ must have received an additive bonus. Therefore, we add the constraint that $t_i$ must belong to one of the groups receiving a bonus by adding a constraint $\sum_{r \in [g]} \delta_{ir} = 1$. 
    \vspace{0.5mm}
    
    Finding the set of points that dominate at least one other point yet are ranked lower can be formulated as a skyline query~\cite{borzsony2001skyline}. To this end, the dataset is augmented with an additional dimension representing the rank of each tuple. Specifically, for a tuple $t_i$, we append its rank in $\pi$ as the $(d+1)$\emph{th} dimension. On the resulting augmented dataset, we compute the skyline—where larger values are preferred in every dimension—using the Double Divide-and-Conquer (\texttt{DDC}) algorithm, which runs in $\mathcal{O}(n\log^{d-1} n + nd)$ time~\cite{
    kung1975finding}.  Any tuple that is dominated by another must necessarily receive a bonus. 
    This procedure can be repeated iteratively until no additional dominated tuples are identified.

\end{enumerate}

The complete {\ilprefined} model is thus:

\vspace{-3mm}
\begin{equation*}
\begin{array}{lll}
\text{minimize}  & 0 & \\
\text{subject to} 
& \sum_{j \in [d]} (t_i[j] - t_{(i+1)}[j]) \cdot w_j  + \sum_{r \in [g]} \delta_{ir} \cdot v_r \\
& - \sum_{r \in [g]} \delta_{(i+1)r} \cdot v_r \ge \epsilon & \forall i < n\\
& \sum_{r \in [g]} \delta_{ir} \le 1 & \forall i \in [n] \\
& \sum_{i \in [n]} \sum_{r \in [g]} \delta_{ir} \le k & \\
& 0 \leq v_r \leq 100 & \forall r \in [g] \\
& w_j \ge 0  &\forall j \in [d]
\end{array}
\end{equation*}
\vspace{-2mm}

\paragraph*{Relationship Between {\ilpbase} and {\ilprefined}}

The relationship between the two formulations is hierarchical. The feasible region of {\ilprefined} is a strict subset of that of {\ilpbase}, ensuring that every refined solution remains sound under the base model. While {\ilpbase} defines a broad feasibility space, {\ilprefined} incorporates additional structure—such as bounded bonuses and restricted membership subsets—that enhances interpretability, aligns with domain semantics, and yields more realistic solutions. In essence, {\ilpbase} serves as the foundational declarative model, whereas {\ilprefined} is its solver-optimized and practically scalable counterpart.

\section{Experimental Evaluation}\label{sec:exp}
We now present an empirical evaluation of our proposed formulations and algorithms. The primary goals of this experimental evaluation are to assess scalability and practical applicability, and to compare the proposed methods with one another and with suitable baselines. We examined both real-world and synthetic datasets to understand how the models behave under diverse data distributions and parameter settings. 

\subsection{Experimental setup}

\paragraph{Experimental environment.} All experiments were run on a server with an AMD EPYC 7763 CPU, with 128GB of RAM allocated for each experiment, running Ubuntu 24.04. All code is implemented using Python 3.12.3 with Gurobi version 12.0.2 as the MILP solver. Gurobi and implemented algorithms were allowed to use up to 16 threads. The timeout was considered 1800 seconds.

\smallskip
\noindent\textbf{Datasets.} We evaluate our methods on one real-world dataset and a set of synthetically generated datasets.

\smallskip
\noindent\emph{Real-world data.} 
The real-world dataset, introduced in prior work~\cite{cai2025findingfairscoringfunction}, contains examination scores of 384,977 candidates who took the Joint Entrance Examination (JEE) in India in 2009. After removing entries with missing values, 384,970 candidates remain, each described by three numerical attributes—Mathematics, Physics, and Chemistry scores—and categorical attributes such as gender and reservation category. We use the three numerical scores as equally weighted features. Categorical attributes define groups receiving additive bonuses. In single-group experiments, female candidates receive a bonus of 4, reflecting the median score difference from male candidates. In two-group experiments, candidates from underprivileged categories receive a bonus of 6, and female candidates receive a further bonus of 4, corresponding to their median differences from the general (GE) category. These data-driven adjustments preserve the original score distribution while modeling group-level advantages.

\paragraph*{Synthetic data} We also construct multiple synthetic datasets to systematically evaluate scalability. Each dataset \( \mathcal{D} = \{t_1, t_2, \ldots, t_n\} \) consists of \( n \)~tuples, where each tuple \( t_i \) is a \( d \) dimensional feature vector.  
Each feature component \( t_i[j] \) is sampled independently from a uniform distribution over \([0,25)\) and rounded to two decimal places. A linear weight vector \( W = (w_1, \ldots, w_d) \) is sampled similarly.  
To introduce structured deviations from a purely linear ranking, additive bonuses are assigned to randomly generated groups of tuples. For each group $G_r$, a bonus \( v_r \) is sampled uniformly from \([5d,10d)\).  For {\ilprefined}, a random subset \( \mathcal{D}' \subseteq \mathcal{D} \) is first selected, and groups are then formed by uniformly sampling tuples from \( \mathcal{D}' \).  Each tuple’s augmented score is computed as $f(t_i) = \sum_{j=1}^d w_j t_i[j] + v_r $
if it belongs to $G_r$, and as $ f(t_i) = \sum_{j=1}^d w_j t_i[j] $ otherwise.  Finally, tuples are sorted in decreasing order of their augmented scores to yield the target ranking~\( \pi \).  This generation process models ranking scenarios in which a base linear utility function is modified by sparse, group-level adjustments, providing a controlled ground truth for evaluation.

\paragraph{\textbf{Baselines}} 
We consider three baselines for comparison. We use ordinal regression~\cite{srinivasan1976linear} as the first baseline, where ranks are used as the ordinal predictor values. As a second baseline, we apply logistic regression by creating a dataset of all pairs of tuples. For a pair $t_i$ ranked above $t_j$, the tuple $t_i - t_j$ is labeled $1$, and $t_j - t_i$ labeled $0$. This results in a supervised learning task with binary labels, on which standard logistic regression can be applied. Both of these baselines output a linear utility function, and given this function, the number of tuples that need to be given a bonus to realize the input function is computed.
We also consider a simple baseline, called {\sampling}, which repeatedly samples a random unit vector as the weight function within the given time limit. For each sampled weight vector, it computes the minimum number of tuples that must receive an additive bonus to realize the target ranking~$\pi$. The best-performing weight vector -- i.e., the one minimizing the number of required bonuses -- is returned as the final output. 

The code is available at the following link \footnote{\url{https://github.com/Trustworthy-Ranking-Data-Management-lab/explaining-rankings-hidden-bonus}}. 
Our experimental evaluation examines both the computational performance and the solution quality of the proposed formulations and algorithms.

\subsection{Research Questions}

\begin{figure*}[t]
    \centering
    \begin{subfigure}[t]{0.33\textwidth}
        \includegraphics[width=\linewidth,height=3.5cm]{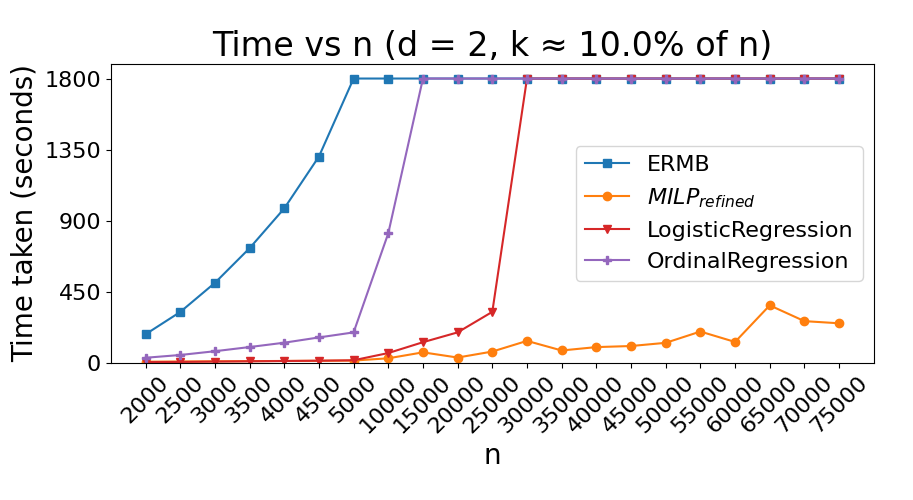}
        \vspace{-7mm}
        \caption{\footnotesize [Singleton groups setting] Runtime analysis of ERMB and \ilprefined on 2D instances.}
        \label{fig:singleton-time-d2}
    \end{subfigure}
    \hfill
    \begin{subfigure}[t]{0.27\textwidth}
        \includegraphics[width=\linewidth,height=3.2cm]{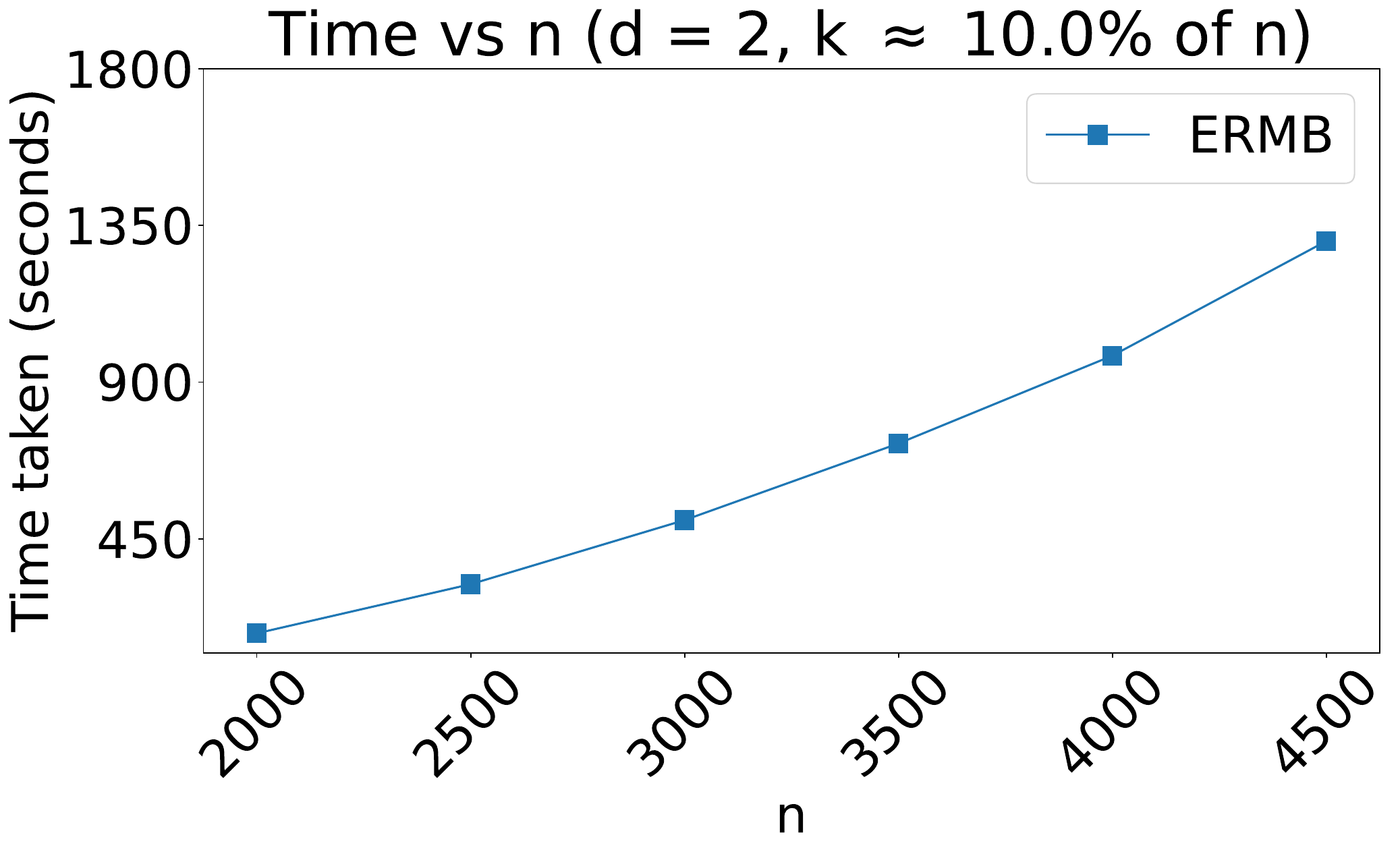}
        \vspace{-7mm}
        \caption{\footnotesize [Singleton groups setting]  Runtime analysis of ERMB on 2D instances.}
        \label{fig:singleton-ermb}
    \end{subfigure}
    \hfill
    \begin{subfigure}[t]{0.33\textwidth}
    \centering
    \includegraphics[width=\linewidth,height=3.5cm]{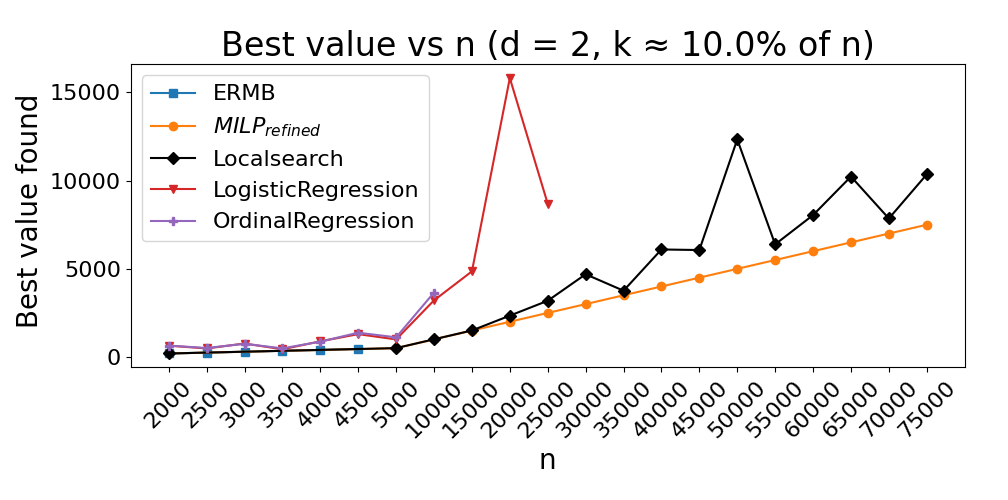}
    \vspace{-7mm}
    \caption{\footnotesize [Singleton groups setting]  Quality analysis for ERMB, \ilprefined and {\sampling} on 2D instances.}
    \label{fig:singleton-best}
    \end{subfigure}
    \vspace{-3mm}
     \caption{Analysis of ERMB}
     \vspace{-3mm}
\end{figure*}

\subparagraph*{\textbf{Singleton Setting.}}
\begin{enumerate}[label={RQ\arabic*:}]
    \item How do the algorithms perform in the singleton-groups setting, and how does their computational behavior vary with increasing instance size?
\end{enumerate}

\subparagraph*{\textbf{One Group of Size \(k\).}}
\begin{enumerate}[label={RQ\arabic*:}, resume]
    \item How does varying the feature dimension \(d\) affect the runtime and scalability of the algorithms in the one-group setting?
    \item How does changing the number of tuples \(n\) influence the computational performance of the algorithms in this setting?
    \item How does the group size parameter \(k\) impact the efficiency and scalability of the algorithms when there is a single protected group?
\end{enumerate}

\subparagraph*{\textbf{Real-World Dataset.}}
\begin{enumerate}[label={RQ\arabic*:}, resume]
    \item How do the algorithms behave with one group setting for a real-world dataset, and how does this configuration affect computational performance?
     \item How do the algorithms behave with a two-group setting for a real-world dataset, and how does this configuration affect computational performance compared with the one-group setting?
\end{enumerate}

\subparagraph*{\textbf{Quality of Explanations.}}

\begin{enumerate}[label={RQ\arabic*:}, resume]
    \item Across the synthetic datasets, and the real world dataset, do algorithms recover a bonus-aware linear utility function that aligns with the observed ranking? %
\end{enumerate}

\vspace{-1mm}
\subsection{Results: Performance and Scalability}

\paragraph*{\textbf{Results: RQ1 -- Singleton Groups Setting}}
We experimented with synthetic two-dimensional data, where the number of tuples given an additive bonus ($k$) was set to 10\% of the total number of tuples ($0.1n$). Figure~\ref{fig:singleton-time-d2} presents the results. As shown, within a timeout of 1800 seconds, {\singAlg} scales only up to 5{,}000 tuples, highlighting its limited practical applicability. This behavior aligns with its theoretical complexity of $\mathcal{O}(n^3 \log n)$ and motivates the need for practical MILP-based reformulations. 

For the special case of two-dimensional data, the MILP-based approach {\ilprefined} scales well, handling up to 75{,}000 tuples within 450 seconds. However, even {\ilprefined} fails to scale beyond two dimensions for singleton groups. Detailed results is shown in the appendix (Figure~\ref{fig:singleton-time-d5}), after $n = 5{,}000$ {\ilprefined} times out for $d = 5$.

For the quality analysis of the produced solutions, we compared the outputs of {\sampling}, {\singAlg}, {\ilprefined}, ordinal regression, and logistic regression whenever solutions were available. Since {\sampling} operates under a fixed time budget and relies on random sampling, the solution quality depends on the allocated runtime. As shown in Figure~\ref{fig:singleton-best}, for two-dimensional data, {\sampling} and {\ilprefined} select an almost identical number of tuples to receive an additive bonus (represented as the best value in the figure) for up to 15{,}000 tuples. However, as $n$ increases, the quality of {\sampling} begins to degrade, even for $d = 2$. Therefore, we exclude {\sampling} from the comparison in the remaining research questions. 
We observe that ordinal regression and logistic regression also find solutions of low quality, requiring a much larger number of tuples to be allocated an additive bonus to realize the observed function. This is not unexpected given that these methods are not designed to account for hidden additive bonuses. Therefore, we also exclude these baselines from the comparison in the remaining research questions.

\begin{figure*}
    \centering
    \begin{subfigure}{0.30\textwidth}
        \includegraphics[width=\linewidth,height=3.5cm]{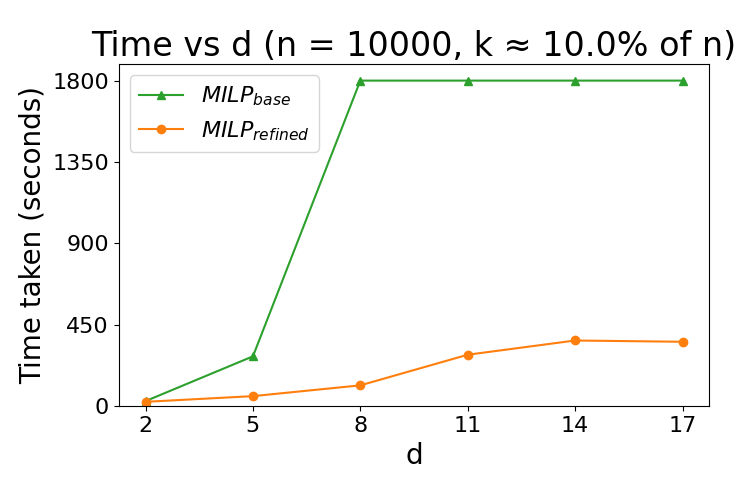}
        \vspace{-8mm}
        \caption{\footnotesize [One group setting. Impact of varying $d$.] Runtime analysis of \ilpbase and \ilprefined.}
        \label{fig:group1-time-n10000}
    \end{subfigure}
    \hfill
    \begin{subfigure}{0.30\textwidth}
          \includegraphics[width=\linewidth,height=3.5cm]{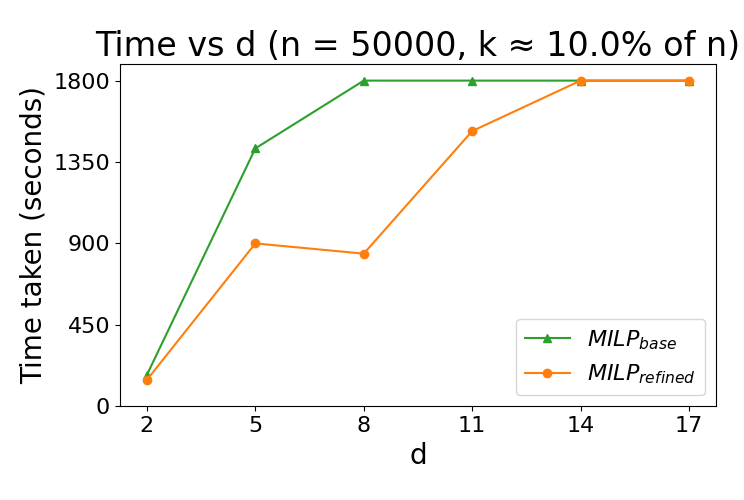}
        \vspace{-8mm}
        \caption{\footnotesize [One group setting. Impact of varying $d$.] Runtime analysis of \ilpbase and \ilprefined.}
        \label{fig:group1-time-n50000}
    \end{subfigure}
    \hfill
    \begin{subfigure}{0.35\textwidth}
          \includegraphics[width=\linewidth,height=3.5cm]{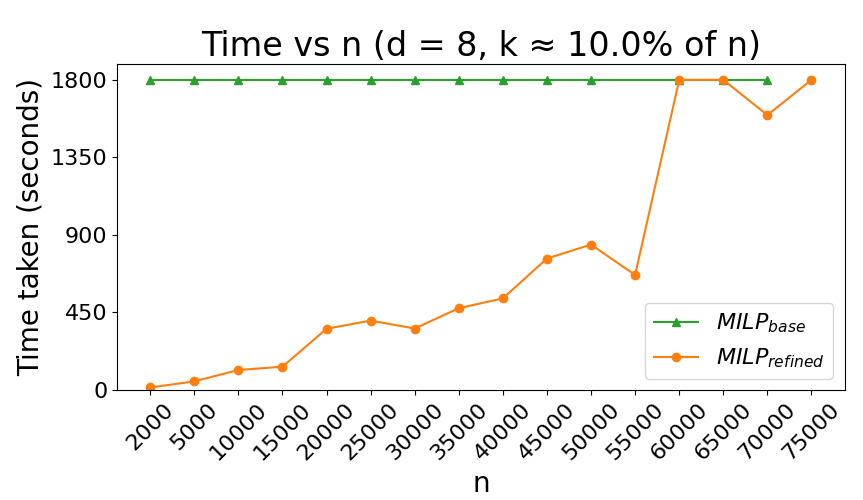}
        \vspace{-6mm}
        \caption{\footnotesize [One group setting. Impact of varying $n$.] Runtime analysis of \ilpbase and \ilprefined on $d=8$ instances.}
        \label{fig:group1-time-d8}
    \end{subfigure}
    \vspace{-3mm}
     \caption{Analysis of impact of varying $d$ (a \& b) and varying $n$ (c).}
     \vspace{-4mm}
    
\end{figure*}

\paragraph*{\textbf{Results: RQ2, RQ3 and RQ4 -- One Group Setting}}
Considering practical formulation of the MILP-based formulations, {\ilpbase} and {\ilprefined}, we now turn to RQ2--RQ4, where we study the scalability of these approaches under varying parameters. In particular, we analyze their behavior as we vary the dimensionality $d$, the number of tuples $n$, and the group size $k$.

\textbf{Regarding RQ2} -- As shown in Figures~\ref{fig:group1-time-n10000} and~\ref{fig:group1-time-n50000}, for fixed tuple sizes of $n = 10{,}000$ and $n = 50{,}000$ respectively (with $k$ set to $0.1n$), the runtime of both {\ilprefined} and {\ilpbase} increases as the dimensionality $d$ grows. For $n = 10{,}000$, the impact is particularly evident for {\ilpbase}, which times out beyond $d = 5$. In contrast, {\ilprefined} remains tractable and scales up to $d = 17$ within 450 seconds. This demonstrates that incorporating domain-specific structure significantly improves the scalability of {\ilprefined}.

However, when increasing the dataset size to $n = 50{,}000$, even {\ilprefined} times out for dimensions above $d = 11$. This indicates that while the refined formulation extends scalability, high-dimensional large instances remain computationally challenging at larger scales.
An additional scalability challenge in higher dimensions is the degradation of the skyline pruning, which is part of {\ilprefined}.  Skyline pruning is extremely effective at $d = 2$, with around 95\% of the tuples that are boosted being identified by skyline pruning. When $d=5$, the effect diminishes significantly, with around 5\% of the tuples identified, and at $d \ge 8$ the skyline pruning is unable to identify tuples that are definitely boosted.

\textbf{Regarding RQ3} -- For two-dimensional data, the increase in tuple size has only a modest impact on runtime. Although we observe a slight increase in execution time as $n$ grows, both {\ilpbase} and {\ilprefined} exhibit comparable performance in this setting. The detailed figure is presented in the appendix (Figure~\ref{fig:group1-time-d2}).

However, for $d = 8$, as shown in ~\ref{fig:group1-time-d8}, the situation changes markedly. Even at $n = 2{,}000$, {\ilpbase} already times out, clearly indicating that dimensionality has a far more significant effect on computational cost than the number of tuples. For $d = 8$, {\ilprefined} remains scalable up to $50{,}000$ tuples; beyond that point, solver-level heuristic changes introduce noticeable performance variance.

\textbf{Regarding RQ4} -- MILP solvers exhibit irregular and non-monotonic behavior: small changes in $k$ often lead to substantial fluctuations in runtime. This behavior is typical for MILPs. Although modifying $k$ only alters the tightness of the capacity constraints $\sum_i \delta_{ir} \le k$, such changes can significantly affect the strength of the MILP relaxation, the size of the feasible region for the $\delta$-variables, the degree of symmetry in the model, and ultimately the structure of the branch-and-bound tree explored by the solver. As a result, the runtime may increase or decrease unpredictably. This phenomenon is also observed in our experiments. For example, with $n = 10{,}000$ and $d = 5$, {\ilpbase} exhibits pronounced runtime peaks at intermediate values of $k$ (in the range of 10–50\%), whereas {\ilprefined} remains consistently faster while still reflecting the same underlying variability. Similar patterns are observed for $d = 8$ and larger values of $n$. Overall, these results highlight the inherent sensitivity of MILP solvers to constraint tightness. Detailed results are presented in the appendix -- Figures~\ref{fig:group1-time-d5-n10000}, \ref{fig:group1-time-d8-n10000}, and \ref{fig:group1-time-d8-n50000}.

\paragraph*{\textbf{Results: RQ5, RQ6 -- Real World Dataset}}
We now turn to the real-world dataset containing more than 300{,}000 tuples.

\begin{figure}[h]
    \centering
    \begin{minipage}{0.48\linewidth}
        \centering
        \includegraphics[width= \linewidth, height=2.9cm]{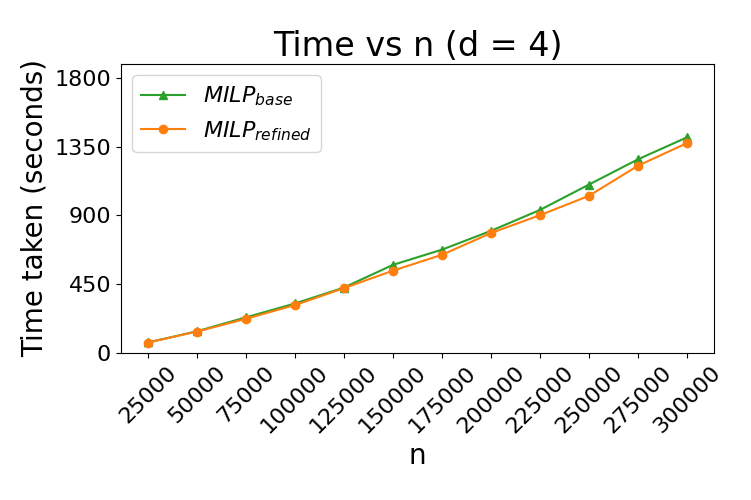}
    \vspace{-10mm}
    \caption{\footnotesize [Real world dataset. One group setting] Runtime analysis of \ilpbase and \ilprefined on instances for one group settings for JEE-dataset.}\label{fig:jee-group1}
    \vspace{-4mm}
    \end{minipage}
    \hfill
    \begin{minipage}{0.48\linewidth}
        \centering
        \includegraphics[width= \linewidth, height=2.9cm]{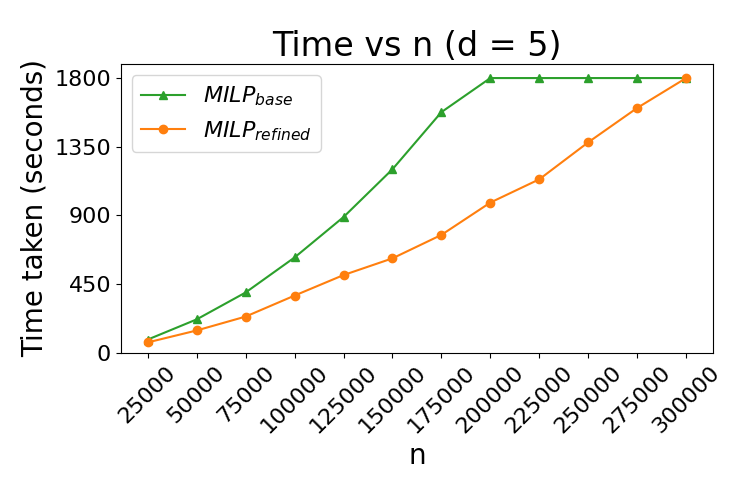}
        \vspace{-10mm}
        \caption{\footnotesize [Real world dataset. Two group setting] Runtime analysis of \ilpbase and \ilprefined on instances for two group settings for JEE-dataset.}
        \label{fig:jee-group2}
        \vspace{-4mm}
    \end{minipage}
    \vspace{-4mm}
\end{figure}

\textbf{Regarding RQ5} --- As shown in Figure~\ref{fig:jee-group1}, both {\ilpbase} and {\ilprefined} exhibit similar performance in the single-group setting. As the tuple size increases, the runtime grows approximately linearly with~$n$. Notably, we are able to find a realizable function explaining the ranking for all 300{,}000 tuples within the 1800-second timeout.

\textbf{Regarding RQ6} --- As shown in Figure~\ref{fig:jee-group2}, in the two-group setting, the domain-knowledge constraints again provide a clear advantage to {\ilprefined}. While {\ilpbase} scales up to 150{,}000 tuples, {\ilprefined} remains tractable up to 275{,}000 tuples.

\paragraph*{\textbf{Results: RQ7 -- Quality of Explanations}}
Finally, we evaluate explanation quality across all datasets. As mentioned while discussing datasets, for the synthetic datasets, both the data and the underlying linear function are known, allowing a precise assessment of reconstruction accuracy. Similarly, for the JEE dataset, the injected additive bonuses and affected tuples are also known, enabling direct evaluation of recovery accuracy. 

Across all synthetic datasets as well as the JEE dataset, the solutions produced by {\ilprefined} and {\ilpbase} exactly recover the added bonuses and the corresponding tuples. These results indicate that the MILP-based formulations scale effectively to large real-world datasets while maintaining high solution quality.

\vspace{-2mm}
\section{Related Work}
\label{sec:related}
\noindent\textbf{Utility functions}: Many classes of utility functions have been explored in prior work. \textit{Monotone utility functions}~\cite{fagin2002combining} form one such class: given two tuples $t_i$ and $t_j$ where $t_i[x] > t_j[x]$ for every attribute $x$, the function assigns a higher score to $t_i$ than to $t_j$. Efficient Top-$k$ evaluation for monotone functions was investigated by Fagin et al.~\cite{fagin2001optimal}, who proposed the Threshold and Fagin’s algorithms. Along similar lines, Xin et al.~\cite{xin2006answering} study Top-$k$ computation under user preferences over selected attributes, modeled as convex functions.

Beyond convexity, \textit{non-linear utility functions} have also been examined, including quadratic functions~\cite{kessler2015k} and more general polynomial functions~\cite{qi2018k} in multi-criteria Top-$k$ settings. Among the many families of utility functions, \textit{linear functions} have been especially influential, forming a core component of multi-criteria decision making~\cite{nanongkai2010regret}, ranking and interactive scoring systems~\cite{wang2021interactive,wang2025interactive}, and the interpretation of complex machine learning models~\cite{ribeiro2016should}.

While non-linear, monotone, and other function classes have been used for Top-$k$ generation, work on utility and scoring frameworks~\cite{shetiya2019unified,asudeh2019designing} has predominantly focused on linear functions. Following this well-established line of research, where linear functions serve as standard models for utility and scoring~\cite{shetiya2019unified,nanongkai2010regret}, we also use \textit{linear functions} as the scoring model in our work with additive bonuses.

\noindent\textbf{Learning ranking functions}: Chen et al.~\cite{chen2025synthesizing} propose \texttt{RankHow}, which explains an observed ranking using a linear function chosen to minimize Spearman’s footrule distance from $\pi$. Their formulation, however, does not allow additive bonuses, and thus may fail to realize $\pi$ when tuples must be organized into at most $g$ groups receiving group-wise boosts, as required in {\multigroup}. Fair ranking approaches~\cite{asudeh2019designing, cai2025findingfairscoringfunction} also learn linear utility functions but focus solely on satisfying group fairness constraints in the top-$k$ positions, without modeling additive adjustments.

Work on reverse top-$k$ queries~\cite{vlachou2010reverse} is also related, seeking linear functions under which specified tuples appear in the top-$k$. Why-not queries~\cite{he2014answering} similarly search for a linear function that places a designated tuple within the top-$k$, while other formulations~\cite{liu2016answering} assume a fixed set of linear functions and aim to minimally modify tuples so that they enter or exit the top-$k$. These approaches, however, do not incorporate additive bonuses and thus do not generalize to our setting; moreover, they only explain top-$k$ behavior for selected tuples and do not address the problem of explaining a complete ranking.

Traditional regression techniques -- such as linear models~\cite{maulud2020review}, decision trees~\cite{charbuty2021classification}, and neural networks~\cite{Lee2022ComplexValuedNN} -- require numeric labels and thus cannot be applied when only an ordering is available. Additionally, extension of these techniques to make them bonus-aware is not straightforward and can be an interesting work in itself. Model-agnostic explanation tools like LIME~\cite{ribeiro2016should} and its 
related IR approaches~\cite{verma2019lirme}, produce explanations by querying the underlying model through sampling of data points around the query point, which is infeasible in our setting where the ranking function is unknown
.

\emph{Our work is the first to study the problem of bonus-aware linear ranking functions as a model of explanation of a given ranking.}

\vspace{-2mm}
\section{Conclusion and Future Work}\label{sec:discussions-future-work}

In this work, we look at explanations to ranking problems based on linear scoring functions with an additive bonus model. An interesting direction is to extend or design new methods for other classes of ranking functions, such as quadratic/non-linear utility functions. Another promising open direction is to consider different models for bonus, for instance, a multiplicative bonus model, where each tuple's bonus would be computed by
$
f_w(t)=bonus(t[\tau])\cdot \left(\sum\nolimits_{j=1}^dw_j\cdot t[j]\right).
$

While our \texttt{\singAlg} approach and the hardness results apply also to the multiplicative model, our extension to \multigroup, and the ILP-based approaches do not extend. An interesting avenue of research is to design efficient algorithms for the multiplicative-bonus setting.

In this paper, we show that no efficient algorithm can exactly solve the bonus-aware linear ranking explanation problem. This hardness result suggests a natural next step: can we design algorithms that achieve a small approximation (in the number of bonus groups)? Or, under standard complexity assumptions, can we prove that even such approximations are impossible? The latter appears especially relevant given our MILP formulation, which points to a close connection with the classical \texttt{Max-$k$-LIN} problem, for which strong hardness-of-approximation results are known (e.g.,~\cite{arora1997hardness, khot2007optimal, o2011hardness, bhangale2021optimal}). This, in turn, raises the question of whether approximation algorithms for the bonus-aware linear ranking explanation problem are likewise unattainable.

\vspace{-2mm}
\section*{Acknowledgments}
{\raggedright The work of Suraj Shetiya was supported by ANRF under Grant ANRF/ECRG/2024/004976/ENS.
The work of Sujoy Bhore was supported by ANRF ARG-MATRICS, Grant 002465. 
The work of Diptarka Chakraborty was supported in part by an MoE AcRF Tier 1 grant (T1 251RES2303) and a Google South \& South-East Asia Research Award. The work of Priyanka Golia was supported by ANRF early career grant ANRF/ECRG/2024/005777/ENS.}

\bibliographystyle{ACM-Reference-Format}
\balance
\bibliography{references}

@inproceedings{chowdhury2025rankshap,
  title={RankSHAP: Shapley value based feature attributions for learning to rank},
  author={Chowdhury, Tanya and Zick, Yair and Allan, James},
  booktitle={International Conference on Learning Representations},
  volume={2025},
  pages={36765--36794},
  year={2025}
}

@inproceedings{chowdhury2023rank,
  title={Rank-lime: local model-agnostic feature attribution for learning to rank},
  author={Chowdhury, Tanya and Rahimi, Razieh and Allan, James},
  booktitle={Proceedings of the 2023 ACM SIGIR International Conference on Theory of Information Retrieval},
  pages={33--37},
  year={2023}
}

@inproceedings{heuss2025rankingshap,
  title={RankingSHAP-Faithful Listwise Feature Attribution Explanations for Ranking Models},
  author={Heuss, Maria and de Rijke, Maarten and Anand, Avishek},
  booktitle={Proceedings of the 48th International ACM SIGIR Conference on Research and Development in Information Retrieval},
  pages={381--391},
  year={2025}
}

@inproceedings{chen2025synthesizing,
  title={Synthesizing Scoring Functions for Rankings using Symbolic Gradient Descent},
  author={Chen, Zixuan and Manolios, Panagiotis and Riedewald, Mirek},
  booktitle={ICDE 2025},
  year={2025},
}

@inproceedings{asudeh2019rrr,
  title={RRR: Rank-regret representative},
  author={Asudeh, Abolfazl and Nazi, Azade and Zhang, Nan and Das, Gautam and Jagadish, HV},
  booktitle={SIGMOD 2019},
  pages={263--280},
  year={2019}
}

@inproceedings{nanongkai2012interactive,
  title={Interactive regret minimization},
  author={Nanongkai, Danupon and Lall, Ashwin and Das Sarma, Atish and Makino, Kazuhisa},
  booktitle={SIGMOD 2012},
  pages={109--120},
  year={2012}
}

@article{chen2024robust,
  title={Robust Best Point Selection under Unreliable User Feedback},
  author={Chen, Qixu and Wong, Raymond Chi-Wing},
  journal={Proceedings of the VLDB Endowment},
  volume={17},
  number={11},
  year={2024},
  publisher={VLDB Endowment}
}

@inproceedings{liu2024fair,
  title={Fair Top-k Query on Alpha-Fairness},
  author={Liu, Hao and Wong, Raymond Chi-Wing and Zhang, Zheng and Xie, Min and Tang, Bo},
  booktitle={ICDE 2024},
  pages={2338--2350},
  year={2024},
  organization={IEEE}
}

@inproceedings{wang2024reverse,
  title={Reverse regret query},
  author={Wang, Weicheng and Wong, Raymond Chi-Wing and Jagadish, HV and Xie, Min},
  booktitle={ICDE 2024},
  year={2024},
  organization={IEEE}
}

@article{fagin2002combining,
  title={Combining fuzzy information: an overview},
  author={Fagin, Ronald},
  journal={ACM SIGMOD Record},
  volume={31},
  number={2},
  pages={109--118},
  year={2002},
  publisher={ACM New York, NY, USA}
}

@article{kung1975finding,
  title={On finding the maxima of a set of vectors},
  author={Kung, Hsiang-Tsung and Luccio, Fabrizio and Preparata, Franco P},
  journal={Journal of the ACM (JACM)},
  volume={22},
  number={4},
  pages={469--476},
  year={1975},
  publisher={ACM New York, NY, USA}
}

@article{Lee2022ComplexValuedNN,
  title={Complex-Valued Neural Networks: A Comprehensive Survey},
  author={Chiyan Lee and Hideyuki Hasegawa and Shangce Gao},
  journal={IEEE/CAA Journal of Automatica Sinica},
  year={2022},
  volume={9},
  pages={1406-1426},
  url={https://api.semanticscholar.org/CorpusID:251324986}
}

@inproceedings{verma2019lirme,
  title={LIRME: locally interpretable ranking model explanation},
  author={Verma, Manisha and Ganguly, Debasis},
  booktitle={SIGIR 2019},
  pages={1281--1284},
  year={2019}
}

@article{charbuty2021classification,
  title={Classification based on decision tree algorithm for machine learning},
  author={Charbuty, Bahzad and Abdulazeez, Adnan},
  journal={Journal of applied science and technology trends},
  volume={2},
  number={01},
  pages={20--28},
  year={2021}
}

@article{maulud2020review,
  title={A review on linear regression comprehensive in machine learning},
  author={Maulud, Dastan and Abdulazeez, Adnan M},
  journal={Journal of applied science and technology trends},
  volume={1},
  number={2},
  pages={140--147},
  year={2020}
}

@article{shetiya2019unified,
  title={A unified optimization algorithm for solving" regret-minimizing representative" problems},
  author={Shetiya, Suraj and Asudeh, Abolfazl and Ahmed, Sadia and Das, Gautam},
  journal={Proceedings of the VLDB Endowment},
  volume={13},
  number={3},
  year={2019}
}

@inproceedings{wang2025interactive,
  title={Interactive Learning for Diverse Top-k Set},
  author={Wang, Weicheng and Wong, Raymond Chi-Wing and Li, Jinyang and Jagadish, HV},
  booktitle={ICDE 2025},
  year={2025},
  organization={IEEE Computer Society}
}

@article{qian2015learning,
  title={Learning user preferences by adaptive pairwise comparison},
  author={Qian, Li and Gao, Jinyang and Jagadish, HV},
  journal={Proceedings of the VLDB Endowment},
  volume={8},
  number={11},
  pages={1322--1333},
  year={2015},
  publisher={VLDB Endowment}
}

@inproceedings{wang2021interactive,
  title={Interactive search for one of the top-k},
  author={Wang, Weicheng and Wong, Raymond Chi-Wing and Xie, Min},
  booktitle={SIGMOD 2021},
  year={2021}
}

@article{nanongkai2010regret,
  title={Regret-minimizing representative databases},
  author={Nanongkai, Danupon and Sarma, Atish Das and Lall, Ashwin and Lipton, Richard J and Xu, Jun},
  journal={PVLDB 2010},
  year={2010},
  publisher={VLDB Endowment}
}

@article{qi2018k,
  title={K-regret queries using multiplicative utility functions},
  author={Qi, Jianzhong and Zuo, Fei and Samet, Hanan and Yao, Jia Cheng},
  journal={ACM TODS 2018},
  year={2018},
  publisher={ACM New York, NY, USA}
}

@article{kessler2015k,
  title={K-regret queries with nonlinear utilities},
  author={Kessler Faulkner, Taylor and Brackenbury, Will and Lall, Ashwin},
  journal={Proceedings of the VLDB Endowment},
  volume={8},
  number={13},
  year={2015},
  publisher={VLDB Endowment}
}

@inproceedings{xin2006answering,
  title={Answering top-k queries with multi-dimensional selections: The ranking cube approach},
  author={Xin, Dong and Han, Jiawei and Cheng, Hong and Li, Xiaolei},
  booktitle={VLDB},
  pages={463--475},
  year={2006},
  publisher={VLDB Endowment},
  address={Seattle, WA, USA}
}

@article{megiddo1984linear,
  title={Linear programming in linear time when the dimension is fixed},
  author={Megiddo, Nimrod},
  journal={Journal of the ACM (JACM)},
  volume={31},
  number={1},
  pages={114--127},
  year={1984},
  publisher={ACM New York, NY, USA}
}

@article{rada2018new,
  title={A new algorithm for enumeration of cells of hyperplane arrangements and a comparison with Avis and Fukuda's reverse search},
  author={Rada, Miroslav and Cerny, Michal},
  journal={SIAM Journal on Discrete Mathematics},
  volume={32},
  number={1},
  pages={455--473},
  year={2018},
  publisher={SIAM}
}

@book{edelsbrunner1987algorithms,
  title={Algorithms in combinatorial geometry},
  author={Edelsbrunner, Herbert},
  volume={10},
  year={1987},
  publisher={Springer Science \& Business Media}
}

@book{toth2017handbook,
  title={Handbook of discrete and computational geometry},
  author={Toth, Csaba D and O'Rourke, Joseph and Goodman, Jacob E},
  year={2017},
  publisher={CRC press}
}

@inproceedings{fagin2001optimal,
  title={Optimal aggregation algorithms for middleware},
  author={Fagin, Ronald and Lotem, Amnon and Naor, Moni},
  booktitle={Proceedings of the twentieth ACM SIGMOD-SIGACT-SIGART symposium on Principles of database systems},
  pages={102--113},
  year={2001}
}

@inproceedings{das2006answering,
  title={Answering top-k queries using views},
  author={Das, Gautam and Gunopulos, Dimitrios and Koudas, Nick and Tsirogiannis, Dimitris},
  booktitle={PVLDB 2006},
  pages={451--462},
  year={2006}
}

@inproceedings{chang2000onion,
  title={The onion technique: Indexing for linear optimization queries},
  author={Chang, Yuan-Chi and Bergman, Lawrence and Castelli, Vittorio and Li, Chung-Sheng and Lo, Ming-Ling and Smith, John R},
  booktitle={ACM SIGMOD 2000},
  pages={391--402},
  year={2000}
}

@article{gale2020explaining,
author = {Gale, Abraham and Marian, Am\'{e}lie},
title = {Explaining monotonic ranking functions},
year = {2020},
issue_date = {December 2020},
publisher = {VLDB Endowment},
volume = {14},
number = {4},
issn = {2150-8097},
url = {https://doi.org/10.14778/3436905.3436922},
doi = {10.14778/3436905.3436922},
journal = {Proc. VLDB Endow.},
month = dec,
pages = {640–652},
numpages = {13}
}

@inproceedings{ribeiro2016should,
  title={" Why should I trust you?" Explaining the predictions of any classifier},
  author={Ribeiro, Marco Tulio and Singh, Sameer and Guestrin, Carlos},
  booktitle={Proceedings of the 22nd ACM SIGKDD international conference on knowledge discovery and data mining},
  year={2016}
}

@article{singh2019policy,
  title={Policy learning for fairness in ranking},
  author={Singh, Ashudeep and Joachims, Thorsten},
  journal={Advances in neural information processing systems},
  volume={32},
  year={2019}
}

@inproceedings{ai2018unbiased,
  title={Unbiased learning to rank with unbiased propensity estimation},
  author={Ai, Qingyao and Bi, Keping and Luo, Cheng and Guo, Jiafeng and Croft, W Bruce},
  booktitle={SIGIR 2018},
  year={2018}
}

@article{doshi2017towards,
  title={Towards a rigorous science of interpretable machine learning},
  author={Doshi-Velez, Finale and Kim, Been},
  journal={arXiv preprint arXiv:1702.08608},
  year={2017}
}

@inproceedings{dwork2001rank,
  title={Rank aggregation methods for the web},
  author={Dwork, Cynthia and Kumar, Ravi and Naor, Moni and Sivakumar, Dandapani},
  booktitle={WWW 2001},
  pages={613--622},
  year={2001}
}

@inproceedings{guo2023rankdnn,
  title={Rankdnn: Learning to rank for few-shot learning},
  author={Guo, Qianyu and Haotong, Gong and Wei, Xujun and Fu, Yanwei and Yu, Yizhou and Zhang, Wenqiang and Ge, Weifeng},
  booktitle={Proceedings of the AAAI conference on artificial intelligence},
  volume={37},
  number={1},
  pages={728--736},
  year={2023}
}

@inproceedings{thonet2022listwise,
  title={Listwise learning to rank based on approximate rank indicators},
  author={Thonet, Thibaut and Cinar, Yagmur Gizem and Gaussier, Eric and Li, Minghan and Renders, Jean-Michel},
  booktitle={Proceedings of the AAAI Conference on Artificial Intelligence},
  volume={36},
  number={8},
  pages={8494--8502},
  year={2022}
}

@inproceedings{morik2020controlling,
  title={Controlling fairness and bias in dynamic learning-to-rank},
  author={Morik, Marco and Singh, Ashudeep and Hong, Jessica and Joachims, Thorsten},
  booktitle={SIGIR 2020},
  pages={429--438},
  year={2020}
}

@inproceedings{sen2020curious,
  title={The curious case of IR explainability: Explaining document scores within and across ranking models},
  author={Sen, Procheta and Ganguly, Debasis and Verma, Manisha and Jones, Gareth JF},
  booktitle={SIGIR 2020},
  pages={2069--2072},
  year={2020}
}

@inproceedings{mathioudakis2020affirmative,
  title={Affirmative action policies for top-k candidates selection: with an application to the design of policies for university admissions},
  author={Mathioudakis, Michael and Castillo, Carlos and Barnabo, Giorgio and Celis, Sergio},
  booktitle={ACM SAC 2020},
  year={2020}
}

@inproceedings{davenport2004computational,
  title={A computational study of the Kemeny rule for preference aggregation},
  author={Davenport, Andrew and Kalagnanam, Jayant},
  booktitle={AAAI},
  volume={4},
  pages={697--702},
  year={2004}
}

@inproceedings{rampisela2024can,
  title={Can we trust recommender system fairness evaluation? the role of fairness and relevance},
  author={Rampisela, Theresia Veronika and Ruotsalo, Tuukka and Maistro, Maria and Lioma, Christina},
  booktitle={SIGIR 2024},
  pages={271--281},
  year={2024}
}

@misc{cai2025findingfairscoringfunction,
      title={Finding a Fair Scoring Function for Top-$k$ Selection: From Hardness to Practice}, 
      author={Guangya Cai},
      year={2025},
      eprint={2503.11575},
      archivePrefix={arXiv},
      primaryClass={cs.DB},
      url={https://arxiv.org/abs/2503.11575}, 
}

@inproceedings{asudeh2019designing,
  title={Designing fair ranking schemes},
  author={Asudeh, Abolfazl and Jagadish, Hosagrahar Visvesvaraya and Stoyanovich, Julia and Das, Gautam},
  booktitle={SIGMOD 2019},
  pages={1259--1276},
  year={2019}
}

@article{liu2016answering,
  title={Answering why-not and why questions on reverse top-k queries},
  author={Liu, Qing and Gao, Yunjun and Chen, Gang and Zheng, Baihua and Zhou, Linlin},
  journal={The VLDB Journal},
  volume={25},
  number={6},
  pages={867--892},
  year={2016},
  publisher={Springer}
}

@article{he2014answering,
  title={Answering Why-Not Questions on Top-K Queries},
  author={He, Zhian and Lo, Eric},
  journal={IEEE Transactions on Knowledge and Data Engineering},
  volume={26},
  number={6},
  pages={1300--1315},
  year={2014},
  publisher={IEEE Educational Activities Department Piscataway, NJ, USA}
}

@inproceedings{vlachou2010reverse,
  title={Reverse top-k queries},
  author={Vlachou, Akrivi and Doulkeridis, Christos and Kotidis, Yannis and N{\o}rv{\aa}g, Kjetil},
  booktitle={ICDE 2010},
  pages={365--376},
  year={2010},
  organization={IEEE}
}

@inproceedings{borzsony2001skyline,
  title={The skyline operator},
  author={B{\"o}rzs{\"o}nyi, Stephan and Kossmann, Donald and Stocker, Konrad},
  booktitle={ICDE 2001},
  year={2001},
  organization={IEEE}
}

@article{arora1997hardness,
  title={The hardness of approximate optima in lattices, codes, and systems of linear equations},
  author={Arora, Sanjeev and Babai, L{\'a}szl{\'o} and Stern, Jacques and Sweedyk, Z},
  journal={Journal of Computer and System Sciences},
  volume={54},
  number={2},
  pages={317--331},
  year={1997},
  publisher={Elsevier}
}

@article{khot2007optimal,
  title={Optimal inapproximability results for MAX-CUT and other 2-variable CSPs?},
  author={Khot, Subhash and Kindler, Guy and Mossel, Elchanan and O’Donnell, Ryan},
  journal={SIAM Journal on Computing},
  volume={37},
  number={1},
  pages={319--357},
  year={2007},
  publisher={SIAM}
}

@inproceedings{shetiya2022fairness,
  title={Fairness-aware range queries for selecting unbiased data},
  author={Shetiya, Suraj and Swift, Ian P and Asudeh, Abolfazl and Das, Gautam},
  booktitle={ICDE 2022},
  year={2022},
  organization={IEEE}
}

@inproceedings{o2011hardness,
  title={Hardness of Max-2Lin and Max-3Lin over integers, reals, and large cyclic groups},
  author={O'Donnell, Ryan and Wu, Yi and Zhou, Yuan},
  booktitle={IEEE CCC 2011},
  year={2011},
  organization={IEEE}
}

@inproceedings{bhangale2021optimal,
  title={Optimal inapproximability of satisfiable k-LIN over non-abelian groups},
  author={Bhangale, Amey and Khot, Subhash},
  booktitle={Proceedings of the 53rd Annual ACM SIGACT Symposium on Theory of Computing},
  pages={1615--1628},
  year={2021}
}

@inproceedings{zhang2023finding,
  title={Finding favourite tuples on data streams with provably few comparisons},
  author={Zhang, Guangyi and Tatti, Nikolaj and Gionis, Aristides},
  booktitle={ACM SIGKDD 2023},
  year={2023}
}

@article{guo2025towards,
  title={Towards strong regret minimization sets: Balancing freshness and diversity in data selection},
  author={Guo, Hongjie and Li, Jianzhong and Gao, Hong},
  journal={Theoretical Computer Science},
  volume={1026},
  pages={114986},
  year={2025},
  publisher={Elsevier}
}

@book{schneider2022convex,
  title={Convex cones: geometry and probability},
  author={Schneider, Rolf},
  volume={2319},
  year={2022},
  publisher={Springer}
}

@inproceedings{storandt2019algorithms,
  title={Algorithms for average regret minimization},
  author={Storandt, Sabine and Funke, Stefan},
  booktitle={Proceedings of the AAAI Conference on Artificial Intelligence},
  volume={33},
  number={01},
  year={2019}
}

@inproceedings{gale2024explainable,
  title={Explainable disparity compensation for efficient fair ranking},
  author={Gale, Abraham and Marian, Am{\'e}lie},
  booktitle={ICDE 2024},
  year={2024},
  organization={IEEE}
}

@article{srinivasan1976linear,
  title={Linear programming computational procedures for ordinal regression},
  author={Srinivasan, V},
  journal={Journal of the ACM (JACM)},
  volume={23},
  number={3},
  pages={475--487},
  year={1976},
  publisher={ACM New York, NY, USA}
}

@inproceedings{10.5555/3295222.3295230,
author = {Lundberg, Scott M. and Lee, Su-In},
title = {A unified approach to interpreting model predictions},
year = {2017},
isbn = {9781510860964},
publisher = {Curran Associates Inc.},
address = {Red Hook, NY, USA},
booktitle = {Proceedings of the 31st International Conference on Neural Information Processing Systems},
pages = {4768–4777},
numpages = {10},
location = {Long Beach, California, USA},
series = {NIPS'17}
}

\appendix
\section*{Appendix}

\section{Additional experimental details}
\label{sec:app-experiment}

\begin{figure*}[htbp]
\centering
\begin{minipage}{0.32\textwidth}
    \centering
    \includegraphics[width=\linewidth,height=3.5cm]{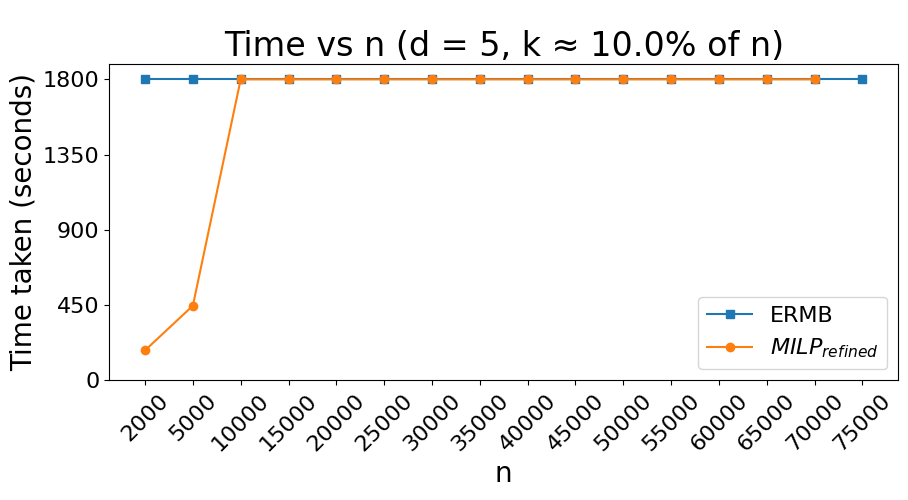}
    \vspace{-8mm}
    \caption{\footnotesize [Singleton group setting]  Runtime analysis of ERMB and \ilprefined on $d=5$ instances for singleton group settings. $k$ is fixed to 10\% of tuples $n$.}
    \label{fig:singleton-time-d5}
    \vspace{-4mm}
\end{minipage}
\hfill
\begin{minipage}{0.28\textwidth}
    \centering
    \includegraphics[width=\linewidth,height=3cm]{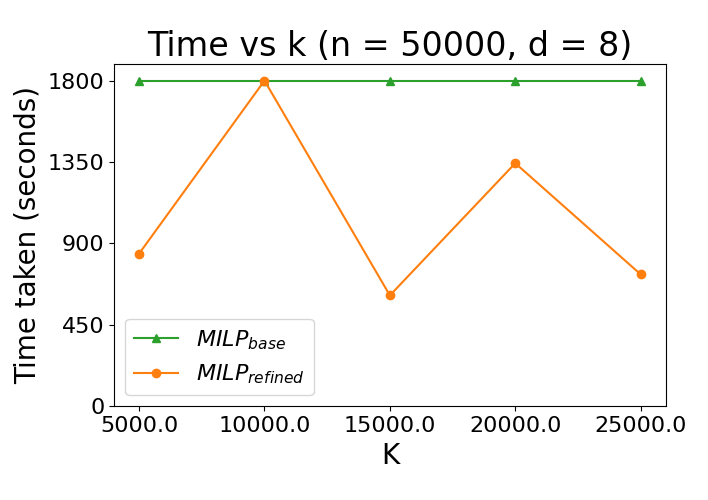}
    \vspace{-8mm}
    \caption{\footnotesize [One group setting. Analysis of the impact of varying $k$.] Runtime analysis of \ilpbase and \ilprefined on $d=8$ and $n=50000$ instances for one group settings.}
    \label{fig:group1-time-d8-n50000}
    \vspace{-4mm}
\end{minipage}
\hfill
\begin{minipage}{0.3\textwidth}
    \centering
    \includegraphics[width=\linewidth,height=3cm]{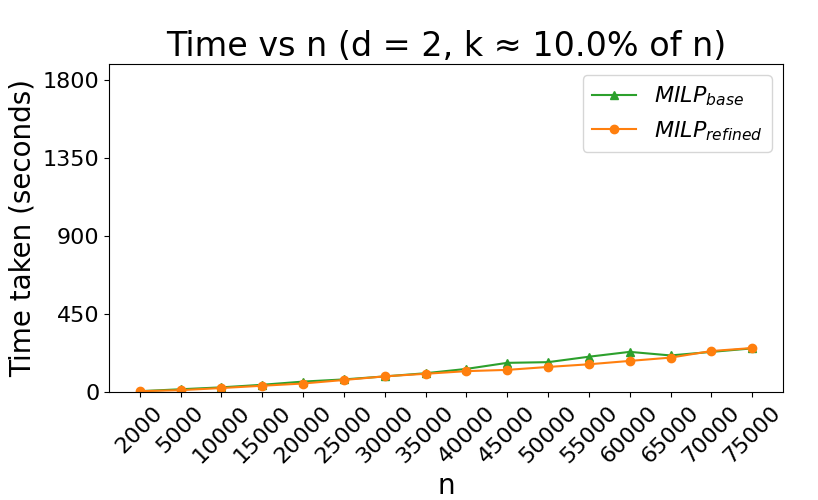}
    \vspace{-8mm}
    \caption{\footnotesize [One group setting. Analysis of the impact of varying $n$.] Runtime analysis of \ilpbase and \ilprefined on 2D instances.}
    \label{fig:group1-time-d2}
    \vspace{-4mm}
\end{minipage}
\end{figure*}

\begin{figure}[h]
    \centering
    \begin{minipage}{0.47\linewidth}
        \centering
        \includegraphics[width=\linewidth,height=3cm]{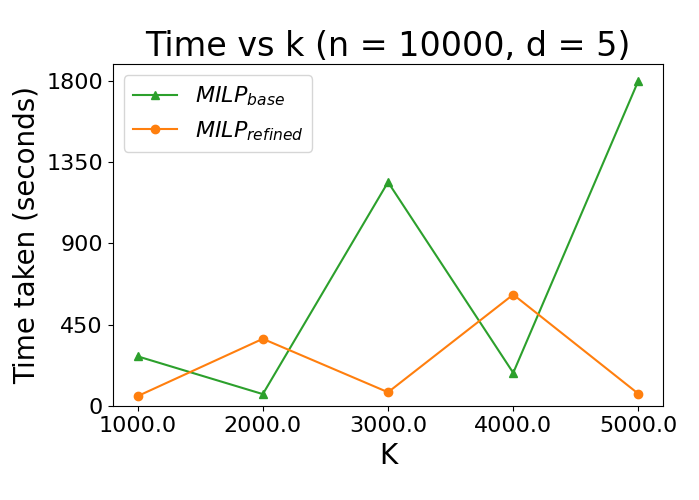}
        \caption{\footnotesize [One group setting. Analysis of the impact of varying $k$.] Runtime analysis of \ilpbase and \ilprefined on $d=5$ and $n=10000$ instances for one group settings.}
        \label{fig:group1-time-d5-n10000}
    \end{minipage}
    \hfill
    \begin{minipage}{0.47\linewidth}
        \centering
        \includegraphics[width=\linewidth,height=3cm]{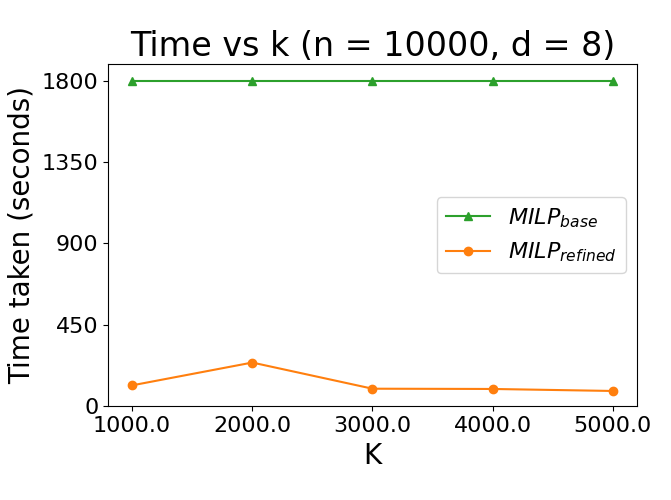}
        \vspace{-10mm}
        \caption{\footnotesize [One group setting. Analysis of the impact of varying $k$.] Runtime analysis of \ilpbase and \ilprefined on $d=8$ and $n=10000$ instances for one group settings.}
        \label{fig:group1-time-d8-n10000}
    \end{minipage}
\end{figure}

\begin{figure*}
    \centering
    \begin{subfigure}{0.30\textwidth}
        \includegraphics[width=\linewidth,height=3.5cm]{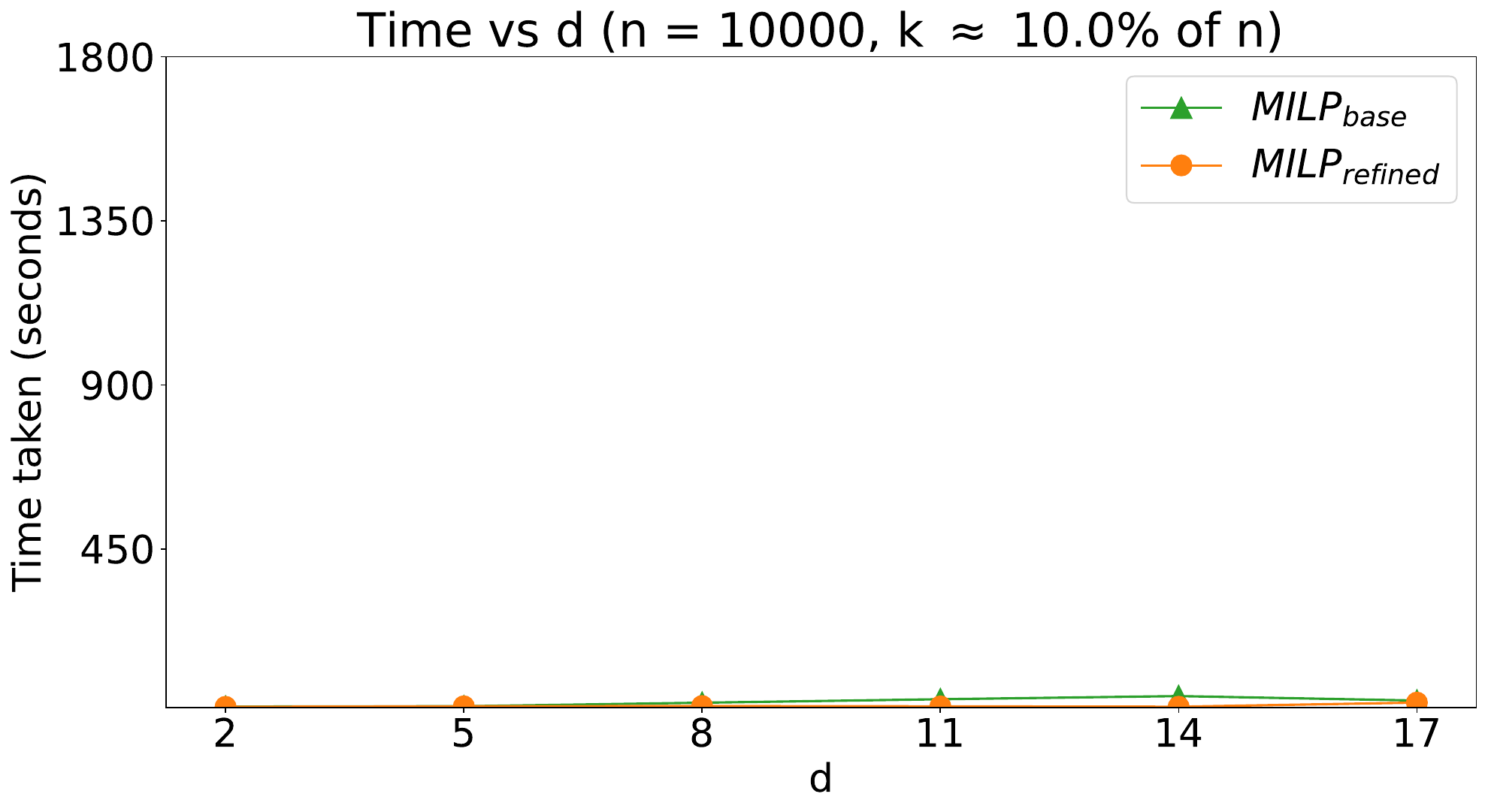}
        \caption{\footnotesize [One group setting, zipfian distribution. Impact of varying $d$.] Runtime analysis of \ilpbase and \ilprefined.}
        \label{fig:zipf-time-n10000}
    \end{subfigure}
    \hfill
    \begin{subfigure}{0.30\textwidth}
          \includegraphics[width=\linewidth,height=3.5cm]{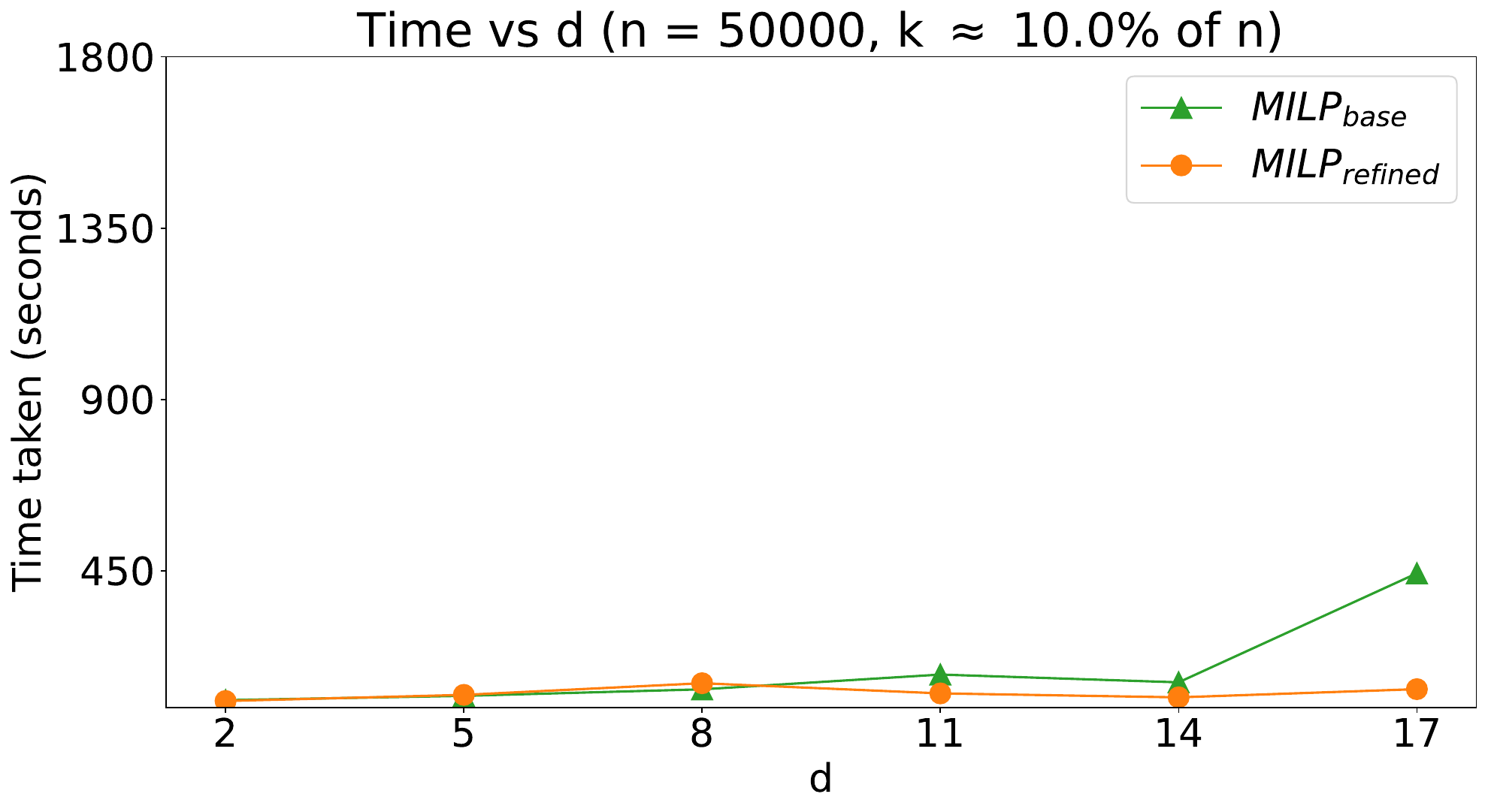}
        \caption{\footnotesize [One group setting, zipfian distribution. Impact of varying $d$.] Runtime analysis of \ilpbase and \ilprefined.}
        \label{fig:zipf-time-n50000}
    \end{subfigure}
    \hfill
    \begin{subfigure}{0.30\textwidth}
          \includegraphics[width=\linewidth,height=3.5cm]{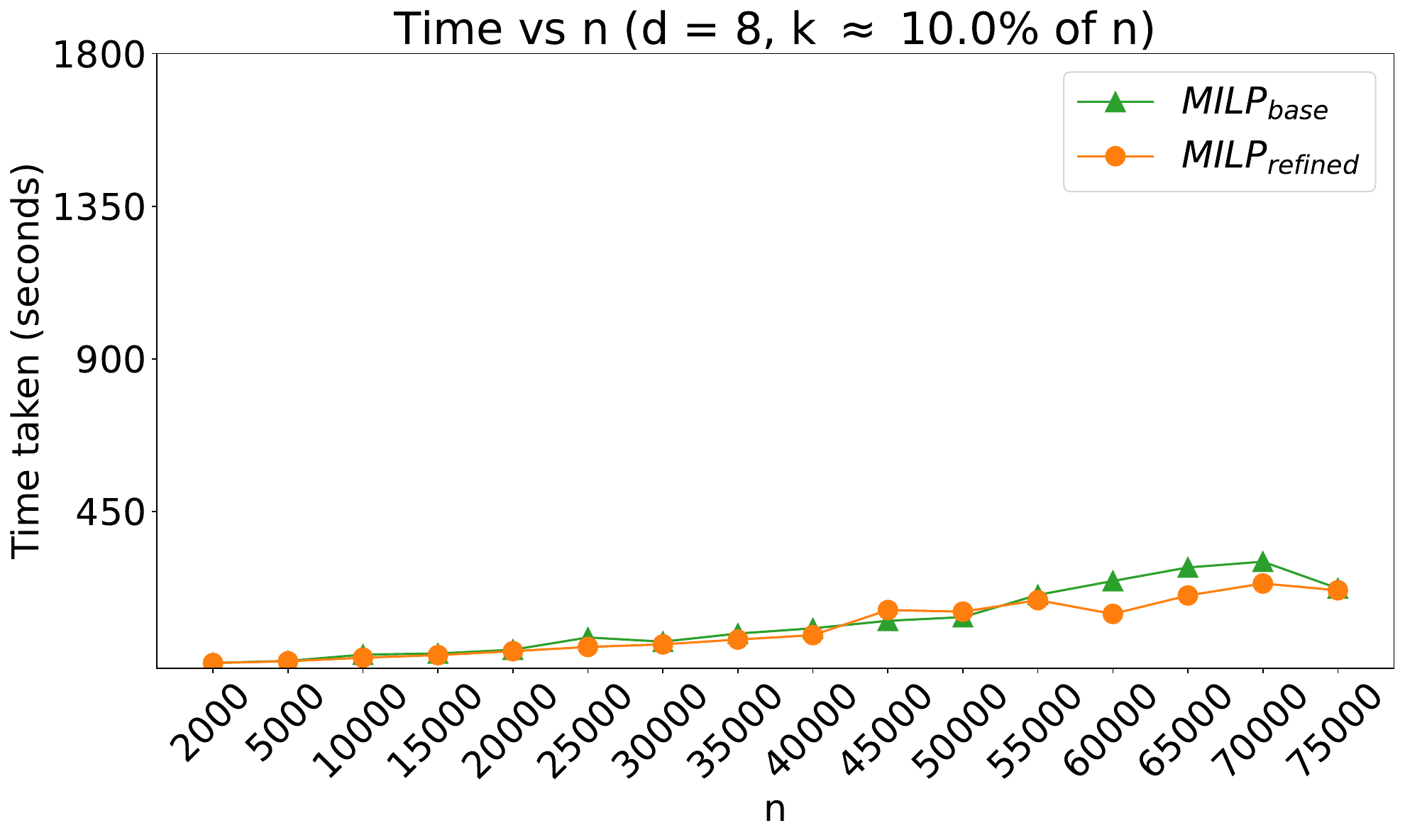}
        \caption{\footnotesize [One group setting, zipfian distribution. Impact of varying $n$.] Runtime analysis of \ilpbase and \ilprefined on $d=8$ instances.}
        \label{fig:zipf-time-d8}
    \end{subfigure}
    \vspace{-3mm}
     \caption{Analysis of impact of varying $d$ (a \& b) and varying $n$ (c).}
     \vspace{-2mm}
    
\end{figure*}

We additionally performed experiments where the tuple feature components and weight vector components are sampled from a Zeta (zipfian) distribution with parameter $2$. Figures~\ref{fig:zipf-time-n10000},~\ref{fig:zipf-time-n50000} and~\ref{fig:zipf-time-d8} show the results. The running time of both {\ilpbase} and {\ilprefined} are significantly faster on these instances as compared to when data was drawn from the uniform distribution (as shown in figures~\ref{fig:group1-time-n10000},~\ref{fig:group1-time-n50000},~\ref{fig:group1-time-d8}). The nature of the zipfian distribution increases the efficacy of the skyline pruning operation significantly, where at $d = 8$ approximately $80\%$ of the boosted tuples can be identified.

\section{Discussion on Positive Weights}
\label{sec:app-positiveweight}
In practice, there are more restrictions on the weights in the function. In many applications, each attribute either positively or negatively contributes to the overall score. In our running example, a user would be interested in obtaining the \emph{best} quality computer at the \emph{least} price. To uniformly deal with such cases, we normalize the attribute score of ``\emph{price}" (in general, any negatively contributing attribute $A$), of each item $t$ by 
\vspace{-4.5mm}
\[t[A]=\frac{\max_j(t_j[A])-t[A]}{\max_j(t_j[A])-\min_j(t_j[A])}\]

Once normalized, the weight of each attribute in the utility function can be restricted to positive values. While our techniques in the paper can handle both positive and negative weights, careful observations and optimizations based on this restriction are presented throughout the paper.

\vspace{-2mm}
\section{Missing Details of \texttt{\singAlg}}
\label{sec:app-ermb}

\begin{lemma}
\label{thm:arbitrary-add}
    \sloppy
    \texttt{\singAlg} (Algorithm~\ref{algo:singleton-groups}) has a time complexity of $\mathcal{O}(n^{2d+1} \log n)$. 
\end{lemma}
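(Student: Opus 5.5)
We bound the cost of \texttt{\singAlg} phase by phase: building the comparison hyperplanes, enumerating the cells of their arrangement, extracting a ranking from each cell, and running an LCS/LIS computation per ranking. The total is the number of cells times the per-cell work, plus the enumeration overhead.

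\emph{Building the hyperplanes (lines 2--4).} This produces $|\mathcal{H}| = \binom{n}{2} = \mathcal{O}(n^2)$ hyperplanes $\mathcal{C}_{i,j}$. Each takes $\mathcal{O}(d)$ time, so this step costs $\mathcal{O}(dn^2)$, which is negligible for constant $d$.

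\emph{Counting the cells.} Every $\mathcal{C}_{i,j}$ passes through the origin, so $\mathcal{H}$ is a central arrangement. By the bound for central arrangements quoted above (\cite{schneider2022convex}, Chapter~5), it has $\mathcal{O}(m^{d-1})$ cells with $m = \mathcal{O}(n^2)$. Hence $|\mathcal{R}| = \mathcal{O}(n^{2(d-1)})$.

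\emph{Enumerating the cells (lines 5--6).} I will use an incremental or reverse-search enumeration such as \cite{rada2018new}. Alternatively, I can sweep the weight space via the zone theorem implied by the $\mathcal{O}(n^d)$ complexity bound of the Proposition from \cite{toth2017handbook}. Either way, each cell should be produced with an interior witness point $w$ in time polynomial in $n$ with small degree. This is the main obstacle: the per-cell enumeration overhead must not exceed roughly $\mathcal{O}(n^3\log n)$. If it did, the total would exceed $\mathcal{O}(n^{2d+1}\log n)$.

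\emph{Per-cell work (lines 7--12).} For each cell with witness $w$:
\begin{itemize}
\item Compute the ranking $\rho$ by evaluating $s_w(t_i)$ for all $i$, in $\mathcal{O}(nd)$ time, and sorting, in $\mathcal{O}(n\log n)$ time.
\item Relabel the tuples by their positions in $\pi$ via a precomputed inverse permutation, in $\mathcal{O}(n)$ time. This reduces $\mathrm{LCS}(\rho,\pi)$ to an LIS, which patience sorting solves in $\mathcal{O}(n\log n)$ time.
\item Recover $\mathcal{G}$ and update the maximum in $\mathcal{O}(n)$ time.
\end{itemize}
So each cell costs $\mathcal{O}(n\log n)$ beyond enumeration.

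\emph{Final step (line 13).} Returning $s^{-1}(\rho^*)$ only requires the stored witness $w$.

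\emph{Combining.} The total is $\mathcal{O}(n^{2d-2})$ cells, each charged at most $\mathcal{O}(n^3\log n)$ (enumeration amortized plus $\mathcal{O}(n\log n)$ ranking and LIS work). This gives $\mathcal{O}(n^{2d-2}\cdot n^3\log n) = \mathcal{O}(n^{2d+1}\log n)$, as claimed.

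If a tighter enumeration bound is not readily citable, I would fall back on the simple scheme below.
\begin{itemize}
\item Each cell of a central arrangement in $\mathbb{R}^d$ is determined by $d-1$ hyperplanes. Enumerate candidate vertices of the induced arrangement on the unit sphere, giving $\mathcal{O}((n^2)^{d-1})$ choices.
\item Perturb each vertex to obtain points in every adjacent cell.
\item Charge sorting and LIS at $\mathcal{O}(n\log n)$ per point.
\end{itemize}
Any slack in this scheme is absorbed by the extra $n^3$ factor in the claimed bound, which leaves room for the enumeration overhead.
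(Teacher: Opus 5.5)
Your decomposition matches the paper's: $\mathcal{O}(n^{2(d-1)})$ cells from the central-arrangement bound, and $\mathcal{O}(n\log n)$ per cell for scoring, sorting and the LIS.

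The gap is the step you yourself call the main obstacle. You never bound the cost of enumerating the cells and producing a witness $w$ in each, yet the combining step still charges this cost $\mathcal{O}(n^3\log n)$ per cell. This is the only nontrivial part of the lemma, and the choice of method matters. A standard reverse search solves an LP for each of the $\mathcal{O}(n^2)$ hyperplanes at every cell. Even with a linear-time LP solver that is $\mathcal{O}(n^4)$ per cell, which overshoots your budget.

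The paper closes the step with two concrete facts:
\begin{enumerate}
\item the incremental enumeration of Rada and \v{C}ern\'y solves one LP per cell, with $d$ variables and $\mathcal{O}(n^2)$ constraints (one per comparison hyperplane);
\item Megiddo's algorithm solves an LP in fixed dimension in time linear in the number of constraints.
\end{enumerate}
Enumeration therefore costs $\mathcal{O}(n^2)$ per cell and $\mathcal{O}(n^{2d})$ in total, comfortably inside $\mathcal{O}(n^{2d+1}\log n)$.

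Your zone-theorem alternative does not help as written. The zone theorem is not implied by the $\mathcal{O}(n^d)$ complexity bound of the Proposition; it is a separate result, and a construction based on it would need its own cost analysis.

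Your fallback scheme does not close the gap either.
\begin{itemize}
\item \emph{Degenerate vertices.} The comparison-hyperplane arrangement is never simple for $d\ge 3$. Since $t_i-t_k=(t_i-t_j)+(t_j-t_k)$, we have $\mathcal{C}_{i,j}\cap\mathcal{C}_{j,k}\subseteq\mathcal{C}_{i,k}$. So vertices of the induced spherical arrangement routinely lie on more than $d-1$ hyperplanes, and on many more when the data contain coincidences. At such a vertex, the $2^{d-1}$ sign perturbations that work in the simple case do not reach every adjacent cell. Producing a point in each adjacent cell is itself a cell-enumeration problem for the local arrangement, and you have not bounded its cost or its amortization over vertices.
\item \emph{Non-spanning differences.} If the differences $t_i-t_j$ do not span $\mathbb{R}^d$ (for instance, all tuples lie on a common affine hyperplane), all comparison hyperplanes contain a common subspace. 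Then either no $d-1$ of them cut out an isolated ray, or they all cut out the same ray, which is adjacent to every cell. In the latter case the local problem is just the original problem, unless you first project that subspace out.
\end{itemize}
Hence ``any slack is absorbed by the extra $n^3$ factor'' is an assertion rather than an argument. The one-LP-per-cell accounting is what actually yields the bound.
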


\begin{proof}
    Algorithm~\ref{algo:singleton-groups} introduces a hyperplane for every pair of tuples $t_i$, $t_j$.  As stated before, this forms a central arrangement and partitions the $d$-dimensional space into $n^{2(d-1)}$ disjoint regions. In our $(d-1)$ dimensional space, enumeration of each of the cells can be performed using the enumeration algorithm in Rada et al.~\cite{rada2018new}. \cite{rada2018new} computes a Linear Program with $d$ variables and $n^2$ constraints in every cell. Meggido~\cite{megiddo1984linear} 
    proposes a linear-time algorithm for solving an LP in fixed dimension $d$. Combining the two results, the enumeration process consumes a total of $\mathcal{O}(n^{2d})$ time.

    In every region, using the weight vector returned by the LP, the scores for the  $n$ tuples are computed in $\mathcal{O}(nd)$ time. Based on the scores, sorting is performed to obtain a ranking $\rho$. Thus, in every region, $\mathcal{O}(n\log{n})$ time is consumed to construct the ranking $\rho$. Upon computing the ranking $\rho$, LIS is computed on $\rho$ in $\mathcal{O}(n\log{n})$ time. Thus, the overall time spent in a single region is $\mathcal{O}(n\log{n})$ time.
    As there are $n^{2d-2}$ regions, our algorithm consumes a total of $\mathcal{O}(n^{2d+1} \log n)$ time.
\end{proof}

\vspace{-4mm}
\section{Details of the Algorithms for {\multigroup}}
\label{sec:app-gblr}

\begin{theorem}
\label{theorem:g-GBLR-algorithm}
    There exists a deterministic algorithm that, given a set of tuples $\mathcal{D} = \{t_1, t_2, ... t_ n \}$ where each $t_i \in \mathbb{R}^d$, a ranking $\pi$ over $\mathcal{D}$ and an integer $k$, if there exists a satisfying solution to \multigroup , returns a vector $w \in \mathbb{R}^d$, $v_1, ...,v_g \in \mathbb{R}^+$ and disjoint $G_1, ..., G_g \subseteq \mathcal{D}$ with $\sum_{i \in [g]} |G_i| \le k$ which satisfies the problem, in time $\mathcal{O} \left(g^{2(d+g)} \cdot n^{2(d+g)} \right)$.
\end{theorem}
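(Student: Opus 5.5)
Following the sketch in Section~3.2, I lift the problem into an augmented weight space and then reuse the cell-enumeration idea of \texttt{\singAlg}. Each tuple $t_i$ becomes $g+1$ copies in $\mathbb{R}^{d+g}$: the base copy $t_i^{(0)}=(t_i,0^g)$ and, for each $r\in[g]$, the copy $t_i^{(r)}=(t_i,e_r)$, where $e_r$ is the $r$-th unit vector. Let $W=(w,v_1,\dots,v_g)\in\mathbb{R}^{d+g}$. Then $W\cdot t_i^{(r)}=s_w(t_i)+v_r$, so any candidate solution $(w,v,G_1,\dots,G_g)$ corresponds to choosing one copy per tuple (copy $r$ if $t_i\in G_r$, copy $0$ otherwise) and scoring the chosen copies by $W$. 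This gives $N=(g+1)n$ augmented points.

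I introduce comparison hyperplanes $\mathcal{C}_{(i,r),(j,s)}$ for all pairs of augmented points. There are $\mathcal{O}(g^2n^2)$ of them, all through the origin in $\mathbb{R}^{d+g}$, so they form a central arrangement. By the same bound used in the analysis of \texttt{\singAlg}, it has $\mathcal{O}((g^2n^2)^{d+g-1})$ cells. I enumerate the cells with the Rada--\v{C}ern\'y method plus Megiddo's fixed-dimension LP, exactly as in the proof of Lemma~\ref{thm:arbitrary-add}. Each cell yields a witness $W$ and a total order $\rho$ on the $N$ augmented points. The restriction $v_r\ge 0$ is handled by intersecting with the orthant $v_r\ge 0$, i.e.\ by adding $g$ extra hyperplanes and discarding cells outside it; $v_r>0$ can be obtained by perturbation inside an open cell.

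Within a fixed cell the question becomes purely combinatorial: choose for each tuple $t_i$ one copy $c(i)\in\{0,\dots,g\}$, with at most $k$ indices having $c(i)\neq 0$, so that the order $\rho$ induces on the chosen copies is exactly $\pi$. This is a budgeted common-subsequence problem, which I solve by dynamic programming. Relabel so that $\pi=(t_1,\dots,t_n)$, and scan $\pi$ in order. Define $\mathrm{DP}[i][r]$ as the minimum number of bonused tuples among $t_1,\dots,t_i$, over valid choices $c(1),\dots,c(i)$ with $c(i)=r$ whose chosen copies appear in decreasing $\rho$-order. The transition is
\[
\mathrm{DP}[i][r]=\min_{s:\ t_{i-1}^{(s)}\succ_\rho t_i^{(r)}}\mathrm{DP}[i-1][s]+[r\neq 0].
\]
Only consecutive constraints matter, because $\rho$ is a total order and so is transitive. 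Each cell therefore costs $\mathcal{O}(ng^2)$, and we accept the cell if $\min_r\mathrm{DP}[n][r]\le k$. The solution $(w,v,G_r)$ is recovered by backtracking, with $W$ the cell's witness.

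For correctness, take any feasible solution $W^*$ and perturb it into an open cell without changing the strict order among the chosen copies; that order is defined by finitely many strict inequalities, so a small perturbation suffices. Every $W$ in that cell orders all augmented points identically, so the DP finds an assignment of cost at most $\sum_r|G_r^*|\le k$. Conversely, any accepted assignment is realized by the cell's witness $W$. The step I expect to be the main obstacle is making the running time match $\mathcal{O}(g^{2(d+g)}n^{2(d+g)})$: the count of $\mathcal{O}((gn)^{2(d+g-1)})$ cells must be combined with the per-cell cost of computing the scores and ranking of $\mathcal{O}(gn)$ points, sorting them, and running the DP, which is $\mathcal{O}(gn\log(gn)+g^2n)\subseteq\mathcal{O}((gn)^2)$. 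Together with the enumeration cost this gives the claimed bound. I would also check that ties and boundary cells, where some $v_r=0$, are handled by the open-cell perturbation argument.
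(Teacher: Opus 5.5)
Your proposal is correct and follows essentially the same route as the paper. Both arguments lift the problem to $\mathbb{R}^{d+g}$ with $g+1$ copies per tuple, enumerate the $\mathcal{O}((gn)^{2(d+g-1)})$ cells of the resulting central arrangement with an $\mathcal{O}(g^2n^2)$ LP per cell, and solve a budgeted matching of $\pi$ against each cell's ranking.

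Your DP over (position in $\pi$, copy index) is a simpler and more transparent replacement for the paper's budgeted-LCS table. Your open-cell perturbation argument, including keeping $v_r>0$, also supplies details that the paper's correctness lemma glosses over, since that lemma simply works in ``the region where $w^*$ is located.''
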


\noindent\textbf{Embedding group bonus as new dimension}: Consider a dataset $\mathcal{D}$, ranking $\pi$ and a budget $g$ on the number of groups. Additionally, there is a budget on the sum of the sizes of the groups, $\sum_{i \in [g]} |G_i|\leq k$. 

To illustrate our idea, consider all tuples $t_j$ that belong to the group $G_i$ with a bonus value of $b_i$. If the unknown weight vector using the linear utility function was $w'=\{w_1',w_2',\ldots, w_d'\}$, then every tuple $t_j\in G_i$ would be scored as follows: $\forall_{t_j \in G_i}, \; f_{w'}(t_j)=b_i + \sum_{a=1}^d w'_a\cdot t_j[a]$.

An observation about the above equation is that $b_i$ is an additive bonus and thus for all tuples $t_j$ that belong to $G_i$, we can imagine this as a ``\emph{bonus dimension}". Therefore, for every group $G_i$, we introduce a new dimension $d+i$. All tuples $t_j \in G_i$ would have the bonus dimension set to $1$ while other tuples would have it set to $0$.  Hence, there are a total of $d+g$ dimensions in our modified setting.

But, two main things need to be handled (i) the tuples that belong to the different groups are not known beforehand, and (ii) the tuples $t_i\in \mathcal{D}$ are still in $d$-dimensional space. One observation is that we need a uniform treatment of all the tuples in the dataset. Thus, for every tuple $t_j$, we create $g+1$ copies of the tuple. The first copy corresponds to a version of the tuple $t_j$ which receives no bonus. For this version of the tuple, $t_j^{(0)}$, all bonus dimensions are set to $0$ while the original $d$ dimensions remain the same as before. For the $g$ other versions, as $t_j$ can only belong to at most one group, the version $i$ of the tuple, $t_j^{(i)}$ has the $d+i$ dimension set to $1$ and the remaining bonus dimensions set to $0$. For all versions of the tuple $t_i$ in $d+g$ dimensional space, the first $d$ dimensions are the same as those in $\mathcal{D}$. We refer to this new dataset with $(g+1)n$ tuples as $\mathcal{D}_G$.

\noindent\textbf{Multi-group version of our algorithm}: 
Our idea is to adapt \texttt{\singAlg} (Algorithm~\ref{algo:singleton-groups}) to this setting using $\mathcal{D}_G$. For every two pairs of tuples $t_i^{(x)}$, $t_j^{(y)}$ in $\mathcal{D}_G$, we introduce a \comphs\ in $d+g$ dimensional space. These $\binom{n(g+1)}{2}$ hyperplanes denoted as $\mathcal{H}_G$ produce a central arrangement which partitions the space into $(n(g+1))^{2(d+g-1)}$ regions. Similar to \texttt{ERMB}, any weight function in a region produces the same ranking of the $n(g+1)$ tuples. Thus, we enumerate each region and generate the set of all possible rankings $\mathcal{R}_G$. 

As there are $g+1$ copies of a tuple, unlike \texttt{\singAlg}, this instance of the sub-problem does not correspond to an instance of the Longest Increasing Subsequence (LIS) problem, needing us to focus on the more general Longest Common Subsequence (LCS) problem. Indeed, given two ordered lists $\pi$ and $\rho$, an LCS between $\pi$ and $\rho$ gives us the information about the groups. Thus, as a final step, we perform a \emph{budgeted} variant of LCS between $\rho \in \mathcal{R}$ and $\pi$, such that entries in $\pi$ can match any copy of the same tuple in $\rho$, while respecting the budget constraints.
The constrained version of LCS is solved by a DP with $n\times n(g+1)$ variables while maintaining a state of $n^g$. We present pseudo-code of \texttt{\multAlg} (Algorithm~\ref{algo:multi-group}) for the \multigroup\ problem and in Algorithm~\ref{alg:multi-lcs} for the budgeted variant of the LCS problem. 

Note that, similar to \texttt{\singAlg}, our \texttt{\multAlg} approach is amenable to transforming the equations from a $d$ dimensional space to a $(d-1)$ dimensional space. Our analysis makes use of this transformation in producing the runtime complexity.

\begin{algorithm}[t]
\caption{\texttt{\texttt{\multAlg}}: Multi-group Ranking Algorithm}\label{algo:multi-group}
{\small
\begin{algorithmic}[1]
\Procedure{\texttt{\multAlg}}{Dataset $\mathcal{D}$, ranking $\pi$, no. of groups $g$, size bound $k$}
\EndProcedure

\State Initialize $\mathcal{D}_G \gets \emptyset$
\ForAll{tuples $t_j \in \mathcal{D}$}
    \State Create $t_j^{(0)}$ with $d$ dims and $g$ bonus dims set to $0$
    \State Add $t_j^{(0)}$ to $\mathcal{D}_G$
    \For{$i = 1$ to $g$}
        \State Create $t_j^{(i)}$ with attribute $d+i \gets 1$, others $0$
        \State Add $t_j^{(i)}$ to $\mathcal{D}_G$
    \EndFor
\EndFor

\State Compute hyperplanes $\mathcal{H}_G$ in $\mathbb{R}^{d+g}$ from all pairs in $\mathcal{D}_G$
\State Enumerate all ranking regions $\mathcal{R}_G$ induced by $\mathcal{H}_G$
\State Initialize $\rho^* \gets \texttt{null}$,\ $\text{maxLCS} \gets 0$,\ $\mathcal{G}\gets \texttt{null}$

\ForAll{$\rho \in \mathcal{R}_G$}
    \State Compute $L$, constrained LCS  between $\rho$ and $\pi$ with size bound $k$ using \textsc{MultipleGroupLCS}
    \If{$|L| > \text{maxLCS}$}
        \State $\text{maxLCS} \gets |L|$; \ \ $\rho^* \gets \rho$;\ \ $\mathcal{G}\gets G$ from LCS output
    \EndIf
\EndFor

\State \Return linear function $s^{-1}(\rho^*)$, $\mathcal{G}$
\end{algorithmic}
}
\end{algorithm}

\begin{algorithm}[ht]
\caption{\textsc{MultipleGroupLCS}}
\label{alg:multi-lcs}
{\small
\begin{algorithmic}[1]
\Procedure{MultipleGroupLCS}{$\pi, \sigma$}
    \State Initialize table DP$[n][(g+1) \cdot n][k]$ to -1.
    \State $s \gets $ Solve($n, (g+1) \cdot n,  k)$
    \State \Return $s$
\EndProcedure

\Procedure{Solve}{$i, j, b$}
    \If {$i = 0$ or $j = 0$}
        \State \Return 0
    \EndIf
    \If {DP$[i][j][b] > -1$}
        \State \Return DP$[i][j][b]$
    \EndIf
    \If {$\pi(i) = p$ and $\sigma(j) = p^{(\ell)}$ and $b > 0$}
        \State DP$[i][j][b-1] = \max (\text{Solve}(i-1, j-1, b-1)  + 1, \text{Solve}(i, j-1, b) $
    \ElsIf {$\pi(i) = p$ and $\sigma(j) = p^{0}$}
        \State DP$[i][j][b] = 1 + \text{Solve}(i-1, j-1, b)$
    \Else
    \State DP$[i][j][b]  = \text{Solve}(i, j-1, b)$
    \EndIf
    \State \Return DP$[i][j][b]$
\EndProcedure
\end{algorithmic}}
\end{algorithm}

\subsubsection{Analysis of \texttt{\multAlg}}
We now present arguments for correctness and time complexity bounds for our algorithm.

\begin{theorem}
\label{thm:multigroup-solve}
    There exists an algorithm that finds a satisfying solution to \multigroup\ . The algorithm runs in time $\mathcal{O} \left(g^{2(d+g)+1} \cdot n^{2d + 3g + 2} \right)$.
\end{theorem}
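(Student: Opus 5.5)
The proof has two parts: correctness of \texttt{\multAlg} (Algorithm~\ref{algo:multi-group}), mirroring Lemma~\ref{theorem:arbitraryk-weights}, and a running-time count mirroring Lemma~\ref{thm:arbitrary-add}. The running-time bound is the product of the number of cells of the lifted central arrangement $\mathcal{H}_G$ and the per-cell cost of the budgeted LCS of Algorithm~\ref{alg:multi-lcs}.

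\textbf{Correctness.} Let $(w^*, v_1^*,\dots,v_g^*, G_1^*,\dots,G_g^*)$ be a feasible solution. Define the lifted weight vector $\hat w^* = (w^*, v_1^*,\dots,v_g^*) \in \mathbb{R}^{d+g}$. For each tuple $t_j$, let $c(j)\in\{0,\dots,g\}$ be the index of the group containing it, with $c(j)=0$ if it is in no group. Then $\hat w^*\cdot t_j^{(c(j))} = f(t_j)$. Hence the chosen copies $t_j^{(c(j))}$, taken in the order $\hat w^*$ induces on $\mathcal{D}_G$, appear exactly in the order $\pi$. They therefore form a common subsequence of $\pi$ and $\rho^*$, the ranking of the cell containing $\hat w^*$. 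This subsequence has length $n$ and uses at most $k$ non-zero copies.

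Any $\hat w$ in the same cell induces the same ranking on $\mathcal{D}_G$. So when the enumeration visits that cell, \textsc{MultipleGroupLCS} finds a full-length common subsequence that respects the budget.

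Conversely, suppose the DP returns a common subsequence of length $n$ that matches each $\pi(i)$ to one copy $t^{(\ell)}$ and uses at most $k$ copies with $\ell\ge1$. Setting $G_\ell$ to be the tuples matched through copy $\ell$ gives disjoint groups with $\sum_\ell|G_\ell|\le k$. Reading $w$ and $v$ off the cell's interior point $\hat w$ realizes $\pi$, because the matched copies are ordered as in $\pi$. Two details need care here:
\begin{itemize}
\item The bonus coordinates of $\hat w$ must be non-negative. I would handle this either by restricting the enumeration to cells meeting the orthant $v\ge 0$ (one extra LP constraint per cell), or by noting that a group with negative bonus can be reindexed.
\item Strictness. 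Ties only occur on hyperplanes, and interior points of cells avoid them.
\end{itemize}
I would also verify that the DP recurrence is a correct budgeted LCS: it matches $\pi(i)$ either to the bonus-free copy at no cost, or to a bonus copy at cost one. This is a standard exchange argument.

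\textbf{Running time.} $\mathcal{D}_G$ has $N=(g+1)n$ points in $\mathbb{R}^{d+g}$, giving $O(N^2)$ comparison hyperplanes through the origin. By the central-arrangement bound used for \texttt{\singAlg}, the number of cells is $O(N^{2(d+g-1)}) = O(g^{2(d+g)}n^{2(d+g)-2})$, up to the factor $(g+1)^2$ absorbed into the constants. Cells are enumerated as in Lemma~\ref{thm:arbitrary-add}, via Rada--\v{C}ern\'y with Megiddo's LP in fixed dimension, at $O(N^2)$ per cell. Per cell we then sort the $N$ copies in $O(gn\log n)$ and run the DP.

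The DP must track how many copies of each of the $g$ bonus types have been used, which gives the $n^g$ state stated in the text. Its table size is therefore $n\cdot (g+1)n\cdot n^{g} = O(g\,n^{g+2})$ entries, each filled in $O(1)$.

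Multiplying, the total is $O(g^{2(d+g)}n^{2d+2g-2}\cdot g\,n^{g+2}) = O(g^{2(d+g)+1}n^{2d+3g})$, which sits within the claimed $O(g^{2(d+g)+1}n^{2d+3g+2})$. The slack absorbs the $O(N^2)$ enumeration overhead per cell, or equivalently a cruder state bound.

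\textbf{Main obstacle.} The delicate step is the DP state. The pseudocode carries only a single budget $b$, but the statement budgets the total group size $k$. So a single counter with $O(k)\le O(n)$ states would in fact suffice, giving a better bound. I would prove the stated bound with the conservative $n^g$ state, noting that only the total count matters for feasibility and that the group bonuses themselves are certified by the cell's weight vector rather than by the DP.
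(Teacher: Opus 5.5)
Your proposal is correct and takes essentially the same route as the paper. You lift to $d+g$ dimensions with $g+1$ copies of each tuple, enumerate the cells of the central arrangement $\mathcal{H}_G$, and run the budgeted LCS in each cell. You then bound the time by the cell count times the per-cell DP cost with the $n^g$ state, which lands you at $O(g^{2(d+g)+1}n^{2d+3g})$, inside the stated bound.

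Your correctness argument is more careful than the paper's, but two small fixes are needed. First, if $\hat w^*$ lies on some comparison hyperplane (e.g.\ two non-chosen copies tie), perturb it into an open cell within the neighbourhood where the chosen copies stay strictly ordered. Second, for non-negative bonuses keep only your first option: restrict to cells with $v>0$, which are well defined because the hyperplanes $v_i=0$ from the pairs $t_j^{(0)},t_j^{(i)}$ lie in $\mathcal{H}_G$. The reindexing alternative fails, since reindexing cannot turn a negative-bonus group into a positive one.
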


\paragraph{Running time.}

\begin{lemma}
\label{theorem:multigroup-time}
   \sloppy
   The running time of \texttt{\multAlg} (Algorithm~\ref{algo:multi-group}) is $\mathcal{O} \left(g^{2(d+g)} \cdot n^{2(d+g)} \right)$.
\end{lemma}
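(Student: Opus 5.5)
The running time of \texttt{\multAlg} splits into the one-time construction and enumeration of the arrangement, and the per-region work (score computation, sorting, and the budgeted LCS). We mirror the proof of Lemma~\ref{thm:arbitrary-add} and bound each phase in terms of $N := (g+1)n$, the number of tuples in $\mathcal{D}_G$, and the ambient dimension $D := d+g$.

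\textbf{Step 1: construction and arrangement size.} Building $\mathcal{D}_G$ takes $\mathcal{O}(N D)$ time. There are $\binom{N}{2} = \mathcal{O}(g^2 n^2)$ comparison hyperplanes, each through the origin of $\mathbb{R}^{D}$, so $\mathcal{H}_G$ is a central arrangement. As in the analysis of \texttt{\singAlg}, we normalize (e.g.\ fix $\sum_j w_j = 1$) to reduce to a $(D-1)$-dimensional affine arrangement. By the central-arrangement bound from \cite{schneider2022convex}, the number of cells is $\mathcal{O}\big((g^2n^2)^{D-1}\big) = \mathcal{O}\big(g^{2(d+g-1)} n^{2(d+g-1)}\big)$, and this bounds $|\mathcal{R}_G|$.

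\textbf{Step 2: enumeration cost.} We use the cell enumeration of Rada et al.~\cite{rada2018new}, solving in each cell an LP with $D$ variables and $\mathcal{O}(N^2)$ constraints. With Megiddo's fixed-dimension linear-time LP~\cite{megiddo1984linear}, each LP costs $\mathcal{O}(N^2)$, treating $D$ as a constant. The total enumeration cost is therefore $\mathcal{O}(N^2 \cdot N^{2(D-1)}) = \mathcal{O}\big(g^{2(d+g)} n^{2(d+g)}\big)$, which is exactly the claimed bound.

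\textbf{Step 3: per-region work.} For each cell we compute scores and sort, in $\mathcal{O}(N\log N)$ time, to obtain $\rho$, and then run \textsc{MultipleGroupLCS}. We must show that the cost of this step, multiplied by the number of cells, stays within $\mathcal{O}(g^{2(d+g)} n^{2(d+g)})$. Since the cell count carries a factor $g^{-2}n^{-2}$ relative to that target, the budget is $\mathcal{O}(g^2n^2)$ per cell. Sorting fits comfortably within this budget.

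\textbf{Main obstacle: the budgeted LCS.} The memoized DP as written has state space $n \times N \times k$. With $k \le n$ this is $\mathcal{O}(g n^3)$, which exceeds the $\mathcal{O}(g^2 n^2)$ per-cell budget by a factor of $n/g$. To close the gap, I would first exploit that each tuple of $\pi$ can match only one of its $g+1$ copies in $\rho$. For a fixed $i$, the DP therefore only needs to consider, at each position of $\rho$, the $\mathcal{O}(g)$ copies of $\pi(i)$ together with a prefix-maximum carried forward. This reduces the DP to roughly $\mathcal{O}(n \cdot g \cdot k)$ relevant transitions. If $k$ is treated as bounded, or absorbed as a single factor of $n$, the per-cell cost becomes $\mathcal{O}(g n^2)$ and the overall bound holds.

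If this compression cannot be made rigorous, the fallback is the honest accounting: the per-cell cost stays $\mathcal{O}(gn^2 k)$, which is what Theorem~\ref{thm:multigroup-solve} appears to absorb through its larger exponent. In that case the lemma would be read as bounding the enumeration phase, which is the dominant geometric term.
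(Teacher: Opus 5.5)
Your Steps 1 and 2 are the paper's own proof: $\mathcal{O}((gn)^{2(d+g-1)})$ cells of the central arrangement, each charged an $\mathcal{O}(g^2n^2)$ fixed-dimension LP during enumeration. The paper handles the LCS step in one line, claiming that because $k\le n$ the DP has $\mathcal{O}(gn^2)$ states with $\mathcal{O}(1)$ work each. You are right that this does not follow from the $n\times(g+1)n\times k$ table that \textsc{MultipleGroupLCS} allocates, which has $\mathcal{O}(gn^2k)=\mathcal{O}(gn^3)$ entries.

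However, your proposal does not close this gap either, so as written it does not prove the lemma. The compression sketch is not an $\mathcal{O}(ngk)$ algorithm. Carrying a prefix maximum across every position of $\rho$ is still $\Theta(N)$ work per pair $(i,b)$, i.e.\ $\mathcal{O}(gn^2k)$ again. If instead you evaluate only at the $g+1$ positions of the copies of $\pi(i)$, you need prefix-maximum queries on the previous row (a Fenwick tree, or Hunt--Szymanski thresholds with binary search). That gives $\mathcal{O}(gn^2\log(gn))$ per cell, which is not $\mathcal{O}(g^2n^2)$ when $g$ is constant. The fallback of reading the lemma as a bound on the enumeration phase alone concedes the statement rather than proving it.

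The missing idea is that in \multigroup\ the ranking must be realized exactly. So inside a cell you never need a genuine partial LCS. You only need to decide whether $\pi$ embeds as a subsequence of $\rho$ when each tuple $t_j$ is represented by one chosen copy $t_j^{(\ell_j)}$, with at most $k$ of the $\ell_j$ nonzero. This is all the correctness argument uses: the copies selected by the optimal groups appear in $\rho$ in the order $\pi$, and conversely any such embedding, together with an interior point of the cell, realizes $\pi$.

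For this an earliest-end DP suffices. Let $f[i][b]$ be the smallest position of $\rho$ at which $\pi(1),\dots,\pi(i)$ can be embedded using exactly $b$ non-base copies. Set $f[0][0]=0$, $f[0][b]=\infty$ for $b>0$, and $f[i-1][-1]=\infty$. Then $f[i][b]$ is the least position of a copy $\pi(i)^{(\ell)}$, $0\le\ell\le g$, that exceeds $f[i-1][b-\mathbf{1}[\ell\neq 0]]$, or $\infty$ if there is none. An exchange argument shows the earliest end dominates: truncating any embedding that ends at $p$ with $\pi(i)$ matched to copy $\ell$ gives an embedding of $\pi(1),\dots,\pi(i-1)$ with $b-\mathbf{1}[\ell\neq 0]$ non-base copies ending before $p$.

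After one $\mathcal{O}(N)$ pass recording the position of every copy in $\rho$, this DP costs $\mathcal{O}(ngk)=\mathcal{O}(gn^2)$ per cell. Together with $\mathcal{O}(N\log N)$ sorting and the $\mathcal{O}(g^2n^2)$ LP, it fits your per-cell budget. Multiplying by the cell count gives the claimed $\mathcal{O}(g^{2(d+g)}n^{2(d+g)})$.
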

\begin{proof}
Observe that for each point in $\mathcal{D}$, we create $g$ additional points in $\mathcal{D}_G$. So $|\mathcal{D}_G| = n(g+1)$. As the dimension of the weight-space has changed to $d+g$ with $(n(g+1))^2$ hyperplanes, the number of possible regions of the central arrangement is $\mathcal{O}((n(g+1))^{2(d-1+g)})$. Similar to Theorem~\ref{thm:arbitrary-add}, the regions are enumerated using \cite{rada2018new}, with the LP consuming $\mathcal{O}(n^2g^2)$ time per region, and then solving the LCS problem on the realized ranking $\rho$.

Now consider the running time of the DP to compute the LCS. It holds that $k \le n$. Therefore, the number of states in the DP is upper bounded by $\mathcal{O}(g n^2)$, and each state is solvable in $\mathcal{O}(1)$ time.

Therefore, the overall runtime of the algorithm is upper bounded by $\mathcal{O}(g^{2(d+g)} \cdot n^{2(d+g)})$.
\end{proof}

\begin{lemma}
\label{theorem:multigroup-weight}
   Given $\mathcal{D}, \pi, k$, if there exists a satisfying solution to \multigroup , then \texttt{\multAlg} (Algorithm~\ref{algo:multi-group}) returns a vector $w \in \mathbb{R}^d$, $v_1, ...,v_g \in \mathbb{R}^+$ and disjoint $G_1, ..., G_g \subseteq \mathcal{D}$ with $\sum_{i \in [g]} |G_i| \le k$ which satisfies the problem.
\end{lemma}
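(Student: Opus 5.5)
We follow the template of the correctness proof for \texttt{\singAlg} (Lemma~\ref{theorem:arbitraryk-weights}), now working in the lifted space $\mathbb{R}^{d+g}$ over the copied dataset $\mathcal{D}_G$. Fix a satisfying solution $(w^*, v^*_1,\dots,v^*_g, G^*_1,\dots,G^*_g)$ with $\sum_i |G^*_i|\le k$. Form the lifted weight vector $\hat w^* = (w^*, v^*_1,\dots,v^*_g)\in\mathbb{R}^{d+g}$. By construction of $\mathcal{D}_G$, the copy $t_j^{(i)}$ gets lifted score $\hat w^*\cdot t_j^{(i)} = s_{w^*}(t_j)+v^*_i$, and $t_j^{(0)}$ gets $s_{w^*}(t_j)$. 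For each $t_j$, select the copy $t_j^{(c(j))}$ with $c(j)=i$ if $t_j\in G^*_i$ and $c(j)=0$ otherwise. The selected copies, ordered by $\hat w^*$, then appear exactly in the order $\pi$.

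Next, we locate $\hat w^*$ in the arrangement $\mathcal{H}_G$ and let $\rho$ be the ranking of $\mathcal{D}_G$ associated with its cell, with $\hat w$ the representative chosen by the enumeration (via \cite{rada2018new}). For any two lifted tuples, their relative order is fixed within a cell, since the sign with respect to each comparison hyperplane is constant. Hence the selected copies appear in $\rho$ in the same relative order as under $\hat w^*$, which is $\pi$. This yields a common subsequence of $\pi$ and $\rho$ of length $n$. It matches each tuple of $\pi$ to exactly one of its copies, and uses at most $k$ non-zero copies.

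The main step is to argue that \textsc{MultipleGroupLCS} finds a full-length ($n$) budgeted common subsequence whenever one exists. We would prove by induction on $(i,j,b)$ that $\mathrm{Solve}(i,j,b)$ equals the longest common subsequence of $\pi[1..i]$ and $\rho[1..j]$ that uses at most $b$ bonus copies, where a bonus copy matches $\pi(i)$ at the cost of one unit of budget. This is the standard LCS recurrence, with a budget coordinate added and with the requirement that each $\pi$ entry match at most one copy. That requirement holds automatically, because matching decrements $i$. We expect this to be the main obstacle: the pseudocode's indexing (writing into $\mathrm{DP}[i][j][b-1]$, and omitting the skip option in the $p^{(0)}$ branch) has to be read charitably. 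We would state and verify the intended recurrence, which takes the maximum over skipping $\rho(j)$ and matching when $\rho(j)$ is a copy of $\pi(i)$. We also restrict attention to a length-$n$ match, since a satisfying solution requires every tuple to be placed. Since the algorithm keeps the maximum over all cells, it finds some cell with $\mathrm{maxLCS}=n$.

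Finally, we read off the output from the returned cell and matching. Set $w=(\hat w_1,\dots,\hat w_d)$ and $v_i=\hat w_{d+i}$, and put $t_j$ in $G_i$ when it is matched via copy $t_j^{(i)}$. The groups are disjoint because each tuple is matched once, and $\sum|G_i|\le k$ by the budget. Any $\hat w$ in the interior of the cell orders the matched copies strictly as $\pi$, so $s_w(t_j)$ plus the bonus of $t_j$'s group reproduces $\pi$. To guarantee $v_i\ge 0$, we either intersect with the cone $\hat w_{d+i}\ge 0$ (adding $g$ coordinate hyperplanes to the arrangement, which does not change the asymptotics) or note that the cell of $\hat w^*$, which has non-negative bonus coordinates, is among those examined. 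This completes the plan.
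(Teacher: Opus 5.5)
Your plan follows the same route as the paper's proof: lift $(w^*,v^*)$ to $\hat w^*\in\mathbb{R}^{d+g}$, observe that the ranking $\rho$ of its cell contains the selected copies in $\pi$-order, and let \textsc{MultipleGroupLCS} detect a full-length budgeted match. It is considerably more careful than the paper, which only asserts these steps.

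However, the second way you offer to secure $v_i\ge 0$ does not work. \multAlg{} keeps the \emph{first} cell attaining the maximum, not the cell of $\hat w^*$, so it may report a cell with some $\hat w_{d+i}<0$. For example, take $d=g=k=1$, tuples $t_j=j$ for $j\in[4]$, and $\pi=(t_4,t_3,t_1,t_2)$. Boosting $t_1$ by $1.5$ gives a full match, and so does giving $t_2$ a bonus of $-1.5$. Nothing in the enumeration prevents the latter from being returned, and it violates $v_1\in\mathbb{R}^+$. So you must commit to your first option. That is a genuine (if cheap) modification of \multAlg{}, and the paper's proof silently needs it too. No new hyperplanes are required: the comparison hyperplane of $t_j^{(0)}$ and $t_j^{(i)}$ is exactly $\hat w_{d+i}=0$. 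Hence every cell already has a fixed sign on each bonus coordinate, and you simply discard cells where some bonus coordinate is negative.

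Two smaller repairs are needed. First, $\hat w^*$ may lie on hyperplanes of $\mathcal{H}_G$ (e.g.\ $v^*_1=v^*_2$, $v^*_i=0$, or ties among unselected copies), so it need not lie in any open cell. Since $\pi$ is realized strictly, the $\pi$-order of the $n$ selected copies is an open condition. A small perturbation, raising any zero bonus coordinate slightly, therefore lands in an open cell with all bonus coordinates positive. Throughout that cell the selected copies keep their order, because the hyperplanes between them belong to $\mathcal{H}_G$.

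Second, the invariant you state for $\mathrm{Solve}$ (the budgeted LCS of $\pi[1..i]$ and $\rho[1..j]$) is not what your described recurrence computes. A recurrence that only skips $\rho(j)$ or matches never skips $\pi(i)$. Either add the option $\mathrm{Solve}(i-1,j,b)$, or prove what the recurrence as written actually computes. It gives the longest suffix of $\pi[1..i]$ embeddable in $\rho[1..j]$ with at most $b$ bonus copies, which equals $n$ exactly when a full match exists, and that is all you need. The missing skip option in the $p^{(0)}$ branch is harmless by the usual exchange argument, so only the $\mathrm{DP}[i][j][b-1]$ index needs correcting.
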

\begin{proof}
Suppose that there does exist some vector $w^*$, groups $G^*_1, ..., G^*_g \subset \mathcal{D}$, and additive values $v^*_1, ..., v^*_g$ that satisfy the given input. 

Consider the region in the arrangement of $\mathcal{H}$ where $w^*$ is located. Let $v^*$ be the optimal bonus scores. Consider the ranking $\sigma$ over $\mathcal{D}_G$ caused by this partition. As $w^*$ and $v^*$ must lie in one of the regions of the arrangement, an LCS of $\rho$ and $\pi$ must definitely be explored. The constrained LCS problem between $\rho$ and $\pi$ can be solved exactly. Based on this, we can obtain both the ranking function $w^*$ and the bonus scores $v^*$ along with the group information. Hence, our algorithm finds a satisfying solution if one exists.
 \end{proof}

\section{Missing Details from the Hardness Results}
\label{sec:app-NP-hard}
The following lemma establishes that a YES instance of {\maxunisat} maps to that of {\arbitrary}.

\begin{lemma}
    \label{lem:completeness}
    If there exists an assignment that satisfies at least $r$ clauses of the formula $F$ by making exactly one literal in each of these $r$ clauses true, then there exist a weight vector $w \in \mathbb{R}^n$ and at most $m - r$ singleton groups such that $\pi$ is realized by the linear utility function $s_w$ using an additive bonus of either $2$ or $-2$ to each point in those singleton groups.
\end{lemma}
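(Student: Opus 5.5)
The plan is to read a weight vector directly off the satisfying assignment, so that every point in $\mathcal{D}$ receives score $0$ exactly when its clause is satisfied with exactly one true literal. Then a bonus of $\pm 2$ on each remaining point moves it back to score $0$, and the ranking $\pi$ is realized with all adjusted scores equal. Since $\pi$ is written with $\preceq$, I read realizability here as allowing ties, i.e.\ $\pi$ is consistent with the non-strict order of the adjusted scores.

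Fix an assignment $\alpha$ that satisfies at least $r$ clauses with exactly one true literal. I define $w\in\mathbb{R}^n$ by $w_j = 1$ if $\alpha(x_j)$ is true and $w_j=-1$ otherwise, so every $w_j \neq 0$ as the model requires. For a clause point $p_i$ I compute $s_w(p_i)=\sum_j w_j p_i[j]$. A positive literal $x_j$ contributes $w_j$, and a negative literal $\bar{x}_j$ contributes $-w_j$. In both cases the contribution is $+1$ if the literal is true under $\alpha$ and $-1$ if it is false. I assume, as is standard, that each clause has two literals on distinct variables, so each coordinate of $p_i$ is well defined. Hence $s_w(p_i)\in\{2,0,-2\}$, according to whether both, exactly one, or neither literal is true. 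Every $q_\ell = 0^n$ has $s_w(q_\ell)=0$.

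For the bonus vector, I give bonus $-2$ to each $p_i$ with both literals true and bonus $+2$ to each $p_i$ with both literals false. Every other point gets no bonus. The bonused points are exactly the clauses not satisfied with exactly one true literal, so there are at most $m-r$ of them, forming at most $m-r$ singleton groups. After the bonuses every point in $\mathcal{D}$ has adjusted score exactly $0$. Any total order on $\mathcal{D}$, and in particular $\pi$, is therefore consistent with sorting by adjusted score.

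The main obstacle is whether the definition of realizing $\pi$ tolerates these ties. If strict scores are demanded, I would perturb $w$ and the bonuses slightly, but the construction should be left as is, since the soundness direction presumably relies on the large padding by the $q$'s rather than on strictness. Apart from this point the argument is a direct computation.
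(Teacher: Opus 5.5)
Your proposal is correct and is essentially the paper's proof. It uses the same $\pm 1$ weight vector read off the assignment, the same computation $s_w(p_i)\in\{2,0,-2\}$ and $s_w(q_s)=0$, and bonuses of $\mp 2$ on the at most $m-r$ remaining clause points, so that every adjusted score is $0$ and $\pi$ is realized with ties.

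One small remark: the perturbation fallback you mention could not work, because the points $q_1,\dots,q_\ell$ are identical and get equal scores under every $w$. So the reduction, and the paper, must read realization of $\pi$ non-strictly, exactly as you do.
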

\begin{proof}
    Let $\tau$ be an assignment that satisfies at least $r$ clauses of the formula $F$ by making exactly one literal in each of these $r$ clauses true. Then consider the following $w \in \mathbb{R}^n$: For each $j \in [n]$, 
    \[
w[j] = \begin{cases}
    1 \quad \quad  \text{if }  \tau \text{ sets the variable } x_j \text{ to true}\\
    -1 \quad \; \text{otherwise}.
\end{cases}
\]
    
    Note, that for each clause $C_i$ in $F$, if $\tau$ satisfies it by making exactly one literal true, then $s_w(p_i)=0$; otherwise, $s_w(p_i)$ would be either $-2$ or $2$. Further, for each $q_s$, for $s \in [\ell]$, $s_w(q_s)=0$. 
    
    Next, consider a singleton group for each point $p_i$ such that $s_w(p_i) \ne 0$. Observe that the number of such singleton groups is at most $m-r$. Furthermore, an additive bonus of either $2$ or $-2$ suffices to make their score equal to $0$, and thus with those bonuses to at most $m-r$ singleton group elements, $\pi$ is realized by the utility function $s_w$. 
\end{proof}

Next, we argue that a NO instance of {\maxunisat} maps to that of {\arbitrary}.

\begin{lemma}
    \label{lem:soundness}
    If there exist a weight vector $w \in \{\mathbb{R}\setminus \{0\}\}^n$ and at most $m - r$ singleton groups such that $\pi$ is realized by the linear utility function $s_w$ using an additive bonus to each point in those singleton groups, then there exists an assignment that satisfies at least $r$ clauses of the formula $F$ by making exactly one literal in each of these $r$ clauses true.
\end{lemma}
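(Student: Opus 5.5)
The plan is to exploit the padding by the $\ell=(m+1)n^2$ copies of the origin. Fix $w$ (all coordinates nonzero) and a set $B$ of at most $m-r$ bonus-receiving points that together realize $\pi$.

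\textbf{Step 1: locate the blocks of $q$'s.} Every $q_s=0^n$ has $s_w(q_s)=0$. Since $|B|\le m-r<m<n^2$, at least one $q$ in each block of $n^2$ consecutive $q$'s in $\pi$ receives no bonus. Fix, for each block $j\in\{0,\dots,m\}$, such an unbonused $q$; call it $\hat q_j$. All $\hat q_j$ have utility exactly $0$. Since $\pi$ is realized, adjusted utilities are non-increasing along $\pi$ (strictly decreasing, if ties are forbidden, which only strengthens the argument). Each $p_i$ lies between $\hat q_{i-1}$ and $\hat q_i$, so its adjusted utility is squeezed to be exactly $0$ (and ties must be allowed for $\pi$ to be realized at all).

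\textbf{Step 2: unbonused clause points.} For every $p_i\notin B$, Step 1 gives $s_w(p_i)=\sum_j p_i[j]w_j=0$. Since $p_i$ has at most two nonzero entries $\pm1$, this says $\epsilon_a w_a+\epsilon_b w_b=0$ for the two literals of $C_i$. Here $\epsilon=+1$ for a positive literal and $-1$ for a negated one. (If $C_i$ has a single literal, or a repeated variable, then $s_w(p_i)=\pm w_a$ or $\pm 2w_a$, which is nonzero; such a $p_i$ must be in $B$ and is harmless to the counting.) At least $m-|B|\ge r$ clauses are of the good kind.

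\textbf{Step 3: rounding to an assignment.} Define $\tau(x_j)=\text{true}$ iff $w_j>0$. This is well defined because $w\in(\mathbb{R}\setminus\{0\})^n$, which is exactly why the hypothesis excludes zero weights. For a good clause, $\epsilon_a w_a=-\epsilon_b w_b\neq 0$, so exactly one of $\epsilon_a w_a$ and $\epsilon_b w_b$ is positive. A literal with sign $\epsilon$ on $x_j$ is true under $\tau$ iff $\epsilon w_j>0$. Hence exactly one literal of $C_i$ is true, and $\tau$ satisfies at least $r$ clauses in the exactly-one sense.

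\textbf{Main obstacle.} The delicate point is Step 1. We must be careful that bonuses may also be applied to $q$'s, possibly of either sign (the completeness direction used $\pm2$). We must also handle the weak/strict ordering convention and the boundary blocks (before $p_1$ and after $p_m$). The pigeonhole bound $|B|<n^2$, guaranteed by $m<n^2$, is what makes each block contain an unbonused zero-utility anchor, and I would check this carefully first.
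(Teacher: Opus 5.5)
Your proof is correct. It reaches the key fact, $s_w(p_i)=0$ for every unbonused $p_i$, by a different and more direct route than the paper.

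The paper fixes a \emph{minimum} bonus set $\mathcal{G}$. It then uses an exchange argument to show that no $q_j$ lies in $\mathcal{G}$: otherwise all $n^2$ points of its block would have to be bonused, which is impossible since $|\mathcal{G}|\le m-r<n^2$. It asserts ``by a similar argument'' that every unbonused $p_i$ has score $0$, and then rounds the signs of $w$.

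You instead use pigeonhole to find an unbonused zero-score anchor in each of the $m+1$ blocks. You then squeeze each $p_i$ between $\hat q_{i-1}$ and $\hat q_i$ using monotonicity of adjusted scores along $\pi$. This needs no minimality, and it does not matter whether some $q$'s carry bonuses.

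The paper's route does give an extra structural fact: a minimum solution bonuses only clause points. However, its exchange step, as stated for an arbitrary bonused $q_j$, is not literally valid. Suppose the two highest-ranked points of $\pi$ carry bonuses $2$ and $1$ and everything else sits at adjusted score $0$; dropping only the top point's bonus breaks monotonicity. Repairing the step essentially requires your squeeze.

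Your signed bookkeeping $\epsilon_a w_a+\epsilon_b w_b=0$ also handles negated literals uniformly. The paper writes $w_a=-w_b$, which covers only clauses whose two literals have the same sign. You also correctly note that $\tau$ alone suffices, so the paper's detour through $\bar\tau$ is unnecessary.

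One cosmetic fix: write $|B|\le m-r\le m<n^2$ rather than $m-r<m$, since $r=0$ is allowed.
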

\begin{proof}
    Let us consider a weight $w \in \{\mathbb{R}\setminus \{0\}\}^n$ such that with additive bonuses to the minimum number of singleton groups, $\pi$ is realized by $s_w$. Let $\mathcal{G}$ be the set of such singleton groups. By our assumption in the lemma, $|\mathcal{G}| \le m-r$. We would like to claim that any such singleton group (i.e., $\in \mathcal{G}$) must contain $p_i$, for some $i \in [m]$, but not any $q_j$'s.

    To argue that, for the sake of contradiction, let there exist $G \in \mathcal{G}$ such that $G=\{q_j\mid \text{for some }j \in [\ell]\}$. Let $j = j_1 n^2 + j_2$, for some $j_1 \in [m+1]$, $j_2 \in [n^2]$. Then there must exist $n^2$ many groups in $\mathcal{G}$ -- a group for each point $q_{j_1 n^2 + t}$, for each $t \in [n^2]$. Otherwise, if any of these groups is not present in $\mathcal{G}$, one can exclude $G=\{q_j\}$ from the set of singleton bonus groups $\mathcal{G}$ and still $\pi$ is realized by $s_w$ (this is because for all $q_i$, $s_w(q_i)=0$), thus violating the minimality of size of $\mathcal{G}$. Now, since $|\mathcal{G}| \le m-r$, it cannot contain any singleton group with point $q_j$, for some $j \in [\ell]$.

    Next, we claim that if for some $p_i$, $\{p_i\} \not \in \mathcal{G}$, we must have that $s_w(p_i)=0$. Otherwise, by an argument similar to the above, we can show that $\mathcal{G}$ must contain $n^2$ many singleton groups, thus violating the assumption $|\mathcal{G}| \le m-r$. 
    
    Now, consider any $p_i$, such that $\{p_i\} \not \in \mathcal{G}$. By the construction, $p_i$ has two non-zero coordinates, say $a,b \in [n]$. Then $w_a = - w_b$. Next, consider the corresponding clause $C_i$. Observe, $C_i$ would be satisfied by making exactly one literal of it true, by assigning $x_a = true$ and $x_b =  false$, or $x_a = false$ and $x_b =  true$. 

    Now, construct an assignment $\tau$ as follows: It sets $x_a = true$ if $w_a > 0$; otherwise $x_a = false$. Also, consider another assignment $\bar{\tau}$, which is obtained by flipping the assignment of each variable in $\tau$. By the argument in the previous paragraph, observe that either $\tau$ or $\bar{\tau}$ satisfies all the clauses $C_i$ for which $\{p_i\} \not \in \mathcal{G}$, by making exactly one literal of each of them to true, and this concludes the proof of the lemma.
\end{proof}

\begin{proof}[Proof of Theorem~\ref{thm:np-hard}]
   It is straightforward to see that the reduction from the {\maxunisat} to {\arbitrary}, which we have provided, is polynomial time. The correctness of the reduction immediately follows from Lemma~\ref{lem:completeness} and Lemma~\ref{lem:soundness}, and that completes the proof of Theorem~\ref{thm:np-hard}.
\end{proof}

\end{document}